\newcolumntype{P}[1]{>{\centering\arraybackslash}p{#1}}
\newcommand{\toolname}{\textsc{HOLL}}
\newcommand{\lckd}{\hat{\varphi}}
\newcommand{\ckt}{\varphi}
\newcommand{\key}{\psi}
\newcommand{\bimp}{\leftrightarrow}
\newcommand{\figref}[1]{Figure~\ref{fig:#1}}
\newcommand{\tabref}[1]{Table~\ref{tab:#1}}
\newcommand{\algref}[1]{Algorithm~\ref{alg:#1}}
\newcolumntype{L}[1]{>{\raggedright\let\newline\\\arraybackslash\hspace{0pt}}m{#1}}
\newcolumntype{C}[1]{>{\centering\let\newline\\\arraybackslash\hspace{0pt}}m{#1}}
\newcolumntype{R}[1]{>{\raggedleft\let\newline\\\arraybackslash\hspace{0pt}}m{#1}}
\newcolumntype{X}{>{\centering\arraybackslash}m{1.1cm}}
\newcommand{\sem}[1]{\llbracket #1 \rrbracket}
\newcommand{\ocl}{\sem{\lckd(\key)}}
\newcommand{\B}[1]{\langle #1\rangle}
\begin{document}
\date{}
\title{\Large \bf HOLL: Program Synthesis for Higher Order Logic Locking}
\titlerunning{Program Synthesis for Higher Order Logic Locking}
%
\author{Gourav Takhar\inst{1} \and Ramesh Karri\inst{2} \and Christian Pilato\inst{3} \and Subhajit Roy\inst{1}}
\institute{Indian Institute of Technology Kanpur, UP, India
\and
New York University, New York, NY, USA
\and
Politecnico di Milano, Milan, Italy}
%
%

\maketitle
\begin{abstract}
  Logic locking ``hides" the functionality of a digital circuit to protect it from counterfeiting, piracy, and malicious design modifications. The original design is transformed into a ``locked" design such that the circuit reveals its correct functionality only when it is ``unlocked" with a \textit{secret} sequence of bits---the \textit{key} bit-string. However, strong attacks, especially the \textit{SAT attack} that uses a SAT solver to recover the key bit-string, have been profoundly effective at breaking the locked circuit and recovering the circuit functionality. 

We lift logic locking to \textit{Higher Order Logic Locking (HOLL)} by hiding a higher-order \textit{relation}, instead of a key of independent values, challenging the attacker to discover this \textit{key relation} to recreate the circuit functionality. Our technique uses \textit{program synthesis} to construct the locked design and synthesize a corresponding \textit{key relation}. HOLL has low overhead and existing attacks for logic locking do not apply as the entity to be recovered is no more a value. To evaluate our proposal, we propose a new attack (\textit{SynthAttack}) that uses an inductive synthesis algorithm guided by an operational circuit as an input-output oracle to recover the hidden functionality. SynthAttack is inspired by the SAT attack, and similar to the SAT attack, it is \textit{verifiably correct}, i.e., if the correct functionality is revealed, a verification check guarantees the same. Our empirical analysis shows that SynthAttack can break HOLL for small circuits and small key relations, but it is ineffective for real-life designs.
\end{abstract}
%

%
%
%

\section{Introduction}

High manufacturing costs in advanced technology nodes are pushing many semiconductor design houses to outsource the fabrication of integrated circuits (IC) to third-party foundries~\cite{fabless,iccost}. A {\em fab-less} design house can increase the investments in the chip's intellectual property, while a single foundry can serve multiple companies. However, this globalization process introduces security threats in the supply chain~\cite{pieee14}. A malicious employee of the foundry can access and reverse engineer the circuit design to make illegal copies. We can classify the protection methods in {\em passive} and {\em active}. Passive methods, like watermarking, allow designers to identify but not prevent illegal copies~\cite{abdel2003ip}. Active methods, like logic locking, alter the chip's functionality to make it unusable by the foundry~\cite{10.1145/3342099}. This alteration depends on a \textit{locking key} that is re-inserted into the chip in a trusted facility, after fabrication. The locking key is thus the ``secret'', known to the design house but unknown to the foundry.
Logic locking assumes that the attackers have no access to the key but they may have access to a functioning chip (obtained, for example, from the legal/illegal market).
However, logic locking has witnessed several attacks that analyze the circuit and attempt key recovery~\cite{8714955,Shamsi2017,sisejkovic2020challenging,removalAttack}.
\newline \indent
In this paper, we combine the intuitions from logic locking, program synthesis, and programmable devices to design a new locking mechanism. Our technique, called \textit{higher order logic locking} (HOLL), locks a design with a \textit{key relation} instead of a sequence of independent \textit{key bits}. HOLL uses \textit{program synthesis}~\cite{alur2015,solar-lezama2013} to translate the original design into a locked one.
Our experiments demonstrate that HOLL is fast, scalable, and robust against attacks. Prior attacks on logic locking, like the SAT attack~\cite{7140252}, are not practical for HOLL. Since the functionality of the key-relation is completely missing, attackers cannot simply make propositional logic queries to recover the key (like~\cite{Shamsi2017,7140252,massad15}). There are variants of logic locking, like TTLock~\cite{yasin2017} and SFLL~\cite{Yasin:2017:PLL:3133956.3133985}, that attempt to combat SAT attacks~\cite{7140252}. However, these techniques use additional logic blocks (comparison and restoration circuits) which makes them prone to attacks via structural and functional analysis on this additional circuitry~\cite{Deepak19}. HOLL is resilient to such techniques as it exposes only a programmable logic that does not leak any information related to the actual functionality to be implemented.
\newline \indent
In contrast to logic locking, attacking HOLL requires solving a second-order problem (see \S\ref{sec:Discussion} for a detailed discussion on this).
To assess the security of our method, we design a new attack, \textit{SynthAttack}, by combining ideas from SAT attack~\cite{7140252} and inductive program synthesis~\cite{solar-lezama2013}. SynthAttack employs a counter-example guided inductive synthesis (CEGIS) procedure guided via a functioning instance of the circuit as an input-output oracle. This attack constructs a synthesis query to discover key relations that invoke \textit{semantically distinct functional behaviors} of the locked design---witnesses to such relations, referred to as \textit{distinguishing inputs}, act as counterexamples to drive inductive learning. When the locked design is verified to have unique functionality, the attack is declared successful, with the corresponding provably-correct key relation.
\newline \indent
Our experimental results (\S\ref{sec:experimentaleval}) show that the time required by an attacker to recover the key relation for a given set of distinguishing inputs (\textit{attack time}) increases exponentially with the size of key relation. While the attacker may be able to recover key relations for small HOLL-locked circuits with small key relations, larger circuits are robust to SynthAttack.
For example, for the {\tt des} benchmark with \texttt{4,174} gates, the asymmetry between HOLL defense and SynthAttack is large; while HOLL can lock this design in less than \texttt{100} seconds, the attack cannot recover the design even within four days for a key relation that increases the area overhead of the IC by only \texttt{1.2}\%. Further, the attack time required to unlock the designs increase exponentially with the complexity of the key relation and the number of distinguishing inputs.
\newline \indent
The key relation can be implemented with reconfigurable or programmable devices, like programmable array logic (PAL) or embedded finite-programmable gate array (eFPGA).
For example, eFPGA, essentially an IP core embedded into an ASIC or SoC, is becoming common in modern SoCs~\cite{efpgaMarket} and has been shown to have high resilience against bit-stream recovery~\cite{iccad21}.
\newline \indent
Our contributions are:
\begin{itemize}
\setlength{\itemsep}{-0.5pt}
    \item We propose a novel IP protection strategy, called \textit{higher order logic locking} (HOLL), that uses program synthesis, challenging attackers to synthesize a key relation (as opposed to a key bit-string, as in logic locking);
    \item To evaluate the security offered by HOLL, we propose a strong adversarial attack algorithm, SynthAttack;
    \item We build tools to apply HOLL and SynthAttack to combinational logic;
    \item We evaluate HOLL on cost, scalability, and robustness.
\end{itemize}

\section{HOLL Overview} \label{sec:overview}

\subsection{Threat Model: the Untrusted Foundry}\label{sec:threatmodel}
We focus on the threat model where the attacker is in the foundry~\cite{10.1145/3342099,8741035} to which a fab-less design house has outsourced its IC fabrication. Such an attacker has access to the IC design and the (locked) GDSII netlist which can be reverse-engineered.
Also, if the attacker can access a working IC (e.g., by procuring an IC from the open market or a discarded IC from the gray market), they can leverage the functional IC's I/O behavior as a black-box oracle.
However, we assume the attacker \textit{cannot} extract the {\em bitstream}, i.e. the correct sequence of configuration bits, from the device. This can be achieved with encryption techniques when the bitstream is not loaded into the device. Also, anti-readback solutions can prevent the attacker from reading the bitstream from the device. The parameters used to synthesize the \textit{key relation} and the \textit{locked circuit} (like the domain-specific language (DSL), budget etc.) are not known to the attacker (see \S\ref{sec:Discussion}).

\subsection{Defending with \textit{HOLL}}
        Consider a hardware circuit $Y \bimp \ckt(X)$, where $X$ and $Y$ are the inputs and outputs, respectively. HOLL uses a higher-order lock---a \textit{secret} relation~($\key$) among a certain number of additional \textit{relation} bits $R$. We refer to $\key$ as the {\bf key relation}.

      \begin{figure}[t]
      \begin{subfigure}[b]{0.35\textwidth}
$t_0 = x_0 \land x_2$;\\
$t_1 = $ \fcolorbox{black}{blue!20}{$ x_3 \land t_0$}\\
$t_2 = $ \fcolorbox{black}{blue!20}{($x_1 \land t_0$)}\\
$y_0 = x_0 \oplus x_2$\\
$y_2 = (x_1 \land x_3) \lor t_2 \lor t_1$\\
$y_1 = t_0 \oplus x_1 \oplus x_3$
        \caption{Original circuit}
        \label{fig:adder}
      \end{subfigure}
        \begin{subfigure}[b]{0.39\textwidth}
$t_0 = x_0 \land x_2$\\
$t_1 = $\fcolorbox{black}{red!20}{$(x_0 \land ( r_4 \oplus r_2) \land x_3)$}\\
$t_2 = $\fcolorbox{black}{red!20}{$(x_0 \land r_3)$}\\
$y_0 = x_0 \oplus x_2$\\
$y_2 = (x_1 \land x_3) \lor t_2 \lor t_1$\\
$y_1 = t_0 \oplus x_1 \oplus x_3$
          \caption{Locked circuit}
          \label{fig:lockadder}
        \end{subfigure}
        \begin{subfigure}[b]{0.2\textwidth}
\{$(r_0 \bimp x_1)$,\\
$(r_1 \bimp x_2)$,\\
$(r_2 \bimp rand)$,\\
$(r_3 \bimp r_0 \land r_1)$,\\
$(r_4 \bimp r_1 \oplus r_2)$\}
            \caption{Key relation} \label{fig:key}
        \end{subfigure}
        \caption{HOLL on a 2-bit Adder.}
        \label{fig:adder_lock_}
      \end{figure}

        \figref{adder} shows an example of a \texttt{2}-bit adder with input $X$ (\{$x_1x_0$, $x_3x_2$\}) and output $Y$ ($y_2y_1y_0$). The circuit is locked by transforming the original expressions (marked in \textcolor{blue}{blue}) in \figref{adder} to the \textit{locked} expressions (marked in \textcolor{red}{red}) in \figref{lockadder}. The locked expressions use the additional \textit{relation} bits $r_2$, $r_3$, and $r_4$, enabling that this \textit{locked design} $\lckd(X,R)$ functions correctly when the secret relation $\key$ (\figref{key}) is installed. The relation $\key$ establishes the correct relation between the {\bf relation bits} $R$. The key relation can be excited by circuit inputs (like in $r_0$ and $r_1,$), constants, or random bits (e.g., from system noise, etc.); for example, the value $rand$ in \figref{key} represents the random generation of a bit (\texttt{0} or \texttt{1}) assigned to $r_2$. The ``output" from the key relation are bits $r_3$ and $r_4$ that must satisfy the relational constraints enforced by the key relation.

        For the sake of simplicity, in the rest of the paper, we assume the relation bits are drawn only from the inputs $X$ of the design. We will attempt to infer key relations of the form $\key(X, R)$. The reader may assume the value $rand$ of in \figref{key} to be a constant value (say $\texttt{0}$) to ease the exposition.

       As $\lckd$ also consumes the \textit{relation bits} $R$, HOLL transforms the original circuit $Y\bimp \ckt(X)$ into a \textit{locked} circuit $Y\bimp{\lckd}(X,R)$ such that the locked circuit functions correctly if the relational constraint $\key(X,R)$ is satisfied. In other words, HOLL is required to preserve the semantic equivalence between the original and locked designs ($\ckt = \lckd \land \key$). Note that it only imposes constraints on the input-output functionality of the circuits and the values generated off the internal gates may diverge. For example, in \figref{lockadder}, the value of $t_1$ may be different from the one in the original design (\figref{adder}), but the final output $y_2$ remains equivalent to the original adder.

        This approach has analogies with the well-known \textit{logic locking} solution~\cite{5401214,dac18,llcarxiv}. Traditional logic locking produces a locked circuit by mutating certain expressions based on input key bits. HOLL differs from logic locking on the type of entities employed as hidden keys. While logic locking uses a \textit{key value} (i.e., a sequence of key-bits), our technique uses a \textit{key relation} (i.e., a functional relation among the key bits). We refer to our scheme as \textit{higher-order logic locking} (HOLL). As synthesizing a relation (a second-order problem) is more challenging to recover than a bit-sequence (a first-order problem), HOLL is, at least in theory, is more secure than logic locking. Our experimental results (\S\ref{sec:experimentaleval}) show that this security also translates to practice.

\subsubsection{Hardware constraints.} Since the key relation must be implemented in the final circuit, we need to consider practical hardware constraints; details of the hardware implementation is provided in \S\ref{sec:hardware_architecture}.
For example, the size of the key relation affects the size of the programmable logic to be used for its implementation. This, in turn, introduces area and delay overheads in the final circuit.
The practical realizability of this technique as a hardware device adds certain constraints:

        \begin{itemize}
        \setlength{\itemsep}{0pt}
            \item The key relation must be small for it to have a small area overhead;
            \item The key relation must only be executed once to avoid a significant performance overhead;
            \item The key relation must encode non-trivial relations between the challenge and response bits to strong security;
            \item The locked expressions are evenly distributed across the design to protect all parts of the circuit, disallowing focused attacks by an attacker on a small part of the circuit that contains all locks.
         \end{itemize}
 \subsubsection{Inferring the key relation.}

        HOLL operates by
        \begin{enumerate}
        \setlength{\itemsep}{-0.5pt}
            \item  carefully selecting a set of expressions, $E \subseteq \ckt$, in the original design $\ckt$;
            \item mutating each expression $e_i \in E$ using the relation bits $R$ to create the corresponding \textit{locked expression}, $\hat{e}_i$.
        \end{enumerate}
        For example, in Figure \ref{fig:adder}, we select two expressions, $E=\{e_1,e_2\}$ where $e_1= x_1 \land t_0$ and $e_2= x_3 \land t_0$. $e_1$ computes $t_2$ and is a function of $t_0$ and $x_1$, while $\hat{e}_1$ uses $x_0$ and $r_3$, which is in turn a relation of $r_0$ and $r_1$.
We formalize our lock program synthesis problem as follows.
\paragraph{Lock Inference. } Given a circuit $Y\bimp \ckt(X)$, construct a \textit{locked circuit} $Y\bimp \lckd(X,R)$ and a \textit{key relation} $\key(X,R)$ such that 
$\lckd$ is \textit{semantically equivalent} to $\ckt$ with the correct relation $\key$. Specifically, it requires us to construct: (1) a key relation $\key$ and (2) a set of locked expressions $\hat{E}$ relating to the set of selected expressions $E$ extracted from $\ckt$ such that the following conditions are met:
\begin{itemize}
\setlength{\itemsep}{0pt}
    \item {\bf Correctness}: The circuit is guaranteed to work correctly for all inputs when the key relation is installed:
        \[ \forall X.\ (\forall R .\  {\key(X,R)\implies (\lckd}(X,R) =  \ckt(X)))\tag{1}\label{eqn:correctness}\]
        where $\lckd \equiv \ckt[\hat{e}_1/e_1, \dots, \hat{e}_n/e_n]$, for $e_i \in E\subseteq \ckt$. The notation $\ckt[e_a/e_b]$ implies that $e_b$ is replaced by $e_a$ in the formula $\ckt$.
    \item {\bf Security}: There must exist some relation $\key'$ (where $\key' \neq \key$) where the circuit exhibits incorrect behavior; in other words, it enforces the key relation to be non-trivial:
        \[ \exists \key\ \exists X\ \exists R.\ (\key'(X,R) \implies {\lckd}(X,R) \neq \ckt(X))\tag{2}\label{eqn:security}\]
\end{itemize}
We pose the above as a \textit{program synthesis}~\cite{bugsynthesis,sketch} problem. In particular, we search for ``mutations'' $\hat{e}_1, \dots, \hat{e}_2$ and a suitable key relation $\key$ such that the above constraints are satisfied.

\subsection{Attacking with \textit{SynthAttack}} As we attempt to hide a relation instead of a key-value, prior attacks on logic locking (like SAT attacks), which attempt to infer key bit-strings, do not apply. However, the attackers can also use program synthesis techniques to recover the key relation using an activated instance of the circuit as an input-output oracle.

We design an attack algorithm, called \textit{SynthAttack}, combining ideas from SAT attack (for logic locking) and counterexample guided inductive program synthesis. Our attack algorithm generates inputs $X_1, X_2, \dots, X_n$ and computes the corresponding outputs $Y_1, Y_2, \dots, Y_n$ using the oracle, to construct a set of \textit{examples} $\Lambda=\{(X_1, Y_1), \dots, (X_n, Y_n)\}$. Then, the attacker can generate a key relation $\key$ that satisfies the above examples, $\lambda$, using a program synthesis query:
        \[\prod_{X_i, Y_i\in \Lambda} \exists R_i.\  \lckd(X_i,R_i) \land \key(X_i, R_i) = Y_i\tag{3} \]
The above query requires \textit{copies} of ${\hat{\varphi}(X, R)}$ for every example---hence, the formula will quickly explode with an increasing number of samples. Our scheme is robust since the sample complexity of the key relationships increases exponentially with the number of relation bits employed. Additionally, the attacker does not know which input bits excite the key relation and how the relation bits are related to each other.

\begin{figure}[t]
\begin{minipage}{0.4\textwidth}
\centering
\captionof{table}{In-out samples.}\label{tab:inputsamples}
\begin{tabular}{ | c | c | c | }
\hline
\cellcolor{gray!40}$X$ &\cellcolor{gray!40}$Y$ &\cellcolor{gray!40}$\hat{Y}$ \\\hline\hline
\cellcolor{gray!0}1111 &\cellcolor{green!40}110 &\cellcolor{green!40}110  \\\hline
\cellcolor{gray!0}1001 &\cellcolor{green!40}011 &\cellcolor{green!40}011 \\\hline
\cellcolor{gray!0}0000 &\cellcolor{green!40}000 &\cellcolor{green!40}000\\\hline
\cellcolor{gray!0}1100 &\cellcolor{green!40}011 &\cellcolor{green!40}011\\\hline
\cellcolor{gray!0}0101 &\cellcolor{green!40}010 &\cellcolor{red!40}110\\\hline
\end{tabular}
\end{minipage}
\hfill
\begin{minipage}{0.4\textwidth}
\{$(r_0 \bimp x_2)$, \\
$(r_1 \bimp x_0)$, \\
$(r_2 \bimp 0)$,\\
$(r_3 \bimp r_0 \land r_1)$,\\
$(r_4 \bimp 0)$\}
  \captionof{figure}{Generated key relation}
  \label{fig:obtainedkey}
\end{minipage}
\end{figure}

For the locked adder (\figref{lockadder}) with the input samples shown in \tabref{inputsamples} (first four rows), the above attack can synthesize the key relation shown in \figref{obtainedkey}. Columns $Y$ and $\hat{Y}$ in \tabref{inputsamples} represent the outputs of the original circuit and the circuit obtained by the attacker, respectively. Even on a \texttt{4}-bit input space, when \texttt{25}\% of all possible samples are provided, the attack fails to recover the key relation as shown by the last input row of \tabref{inputsamples}. The red box highlights the output in the attacker circuit does not match the original design.
\paragraph{Definition (Distinguishing Input).} Given a locked circuit $\lckd$, we refer to input $X$ as a distinguishing input if there exist candidate relations $\key_1$ and $\key_2$ that evoke semantically distinct behaviors on the input $X$. Formally, $X$ is a distinguishing input provided the following formula is satisfiable\footnote{All free variables are existentially quantified} on some relations $\key_1$ and $\key_2$:
\begin{align}
\lckd(X, R_1) \neq \lckd(X, R_2) \land \key_1(X, R_1) \land \key_2(X, R_2)\tag{4}\label{eqn:distingInput}
\end{align}
It searches for a \textit{distinguishing input}, $X_d$, that produces conflicting outputs on the locked design. Any such distinguishing input is added to the set of examples, $\Lambda$, and the query repeated. If the query is unsatisfiable, it implies that no other relation can produce a different behavior on the locked design and so the relation that satisfies the current set of examples must be a correct key relation.

Though SynthAttack significantly reduces the sample complexity of the attack, our experiments demonstrate that SynthAttack is still unsuccessful at breaking HOLL for larger designs.

\section{Program Synthesis to Infer Key Relations}\label{sec:progsynth}

We represent the key relation $\key$ as a propositional formula, represented as a set (conjunction) of propositional terms. The terms in $\key$ belong are categorised as:
\begin{itemize}

\setlength{\itemsep}{0pt}
    \item {\bf Stimulus terms}: As mentioned in \S\ref{sec:overview}, a subset of the relation bits are related to input bits or constants; the stimulus terms appear as $(r_i \leftarrow x_j)$ where $r_i\in R, x_j\in X\cup \{0,1\}$.
    \item {\bf Latent terms}: These clauses establish a relation among the relation bits; the variables $v$ in these terms are drawn from the relation bits $R$, i.e. $v \in R$.
\end{itemize}
For example, in \figref{key} the terms $(r_0 \leftarrow x_1)$, $(r_1 \leftarrow x_2)$, and $(r_2 \leftarrow rand)$ are stimulus terms, while $(r_3 \leftarrow (r_0 \land r_1))$ and $(r_4 \leftarrow (r_1 \oplus r_2))$ are latent terms.

\paragraph{Budget.} As the key relation may need to be implemented within a limited hardware budget, our synthesis imposes a hard threshold on its size. The threshold could directly capture the hardware constraints for implementing the key relation (e.g., the estimated number of gates or ports) or indirectly indicate the complexity of the key relation (e.g., number of relation bits, propositional terms, or latent terms).

\subsection{Lock and Key Inference}\label{sec:keyinference}

\begin{algorithm}[t]
\SetAlgoLined
 $\key \gets \emptyset$ \;
 $\lckd \gets \ckt$ \;
 $done \gets$~{\tt False} \;
 \While {{\bf not} $done$}{\label{line:start_loop}
    $E \gets SelectExpr(\lckd)$\;
    $H, \hat{E} \gets Synthesize(\key,\lckd,E)$ \;\label{line:synth}
    $\key' \gets \key \cup H$ \;
    \eIf {$Budget(\key') \leq T$}{ \label{line:budget}
        $\key \gets \key'$ \;
        $\lckd \gets \lckd[\{\hat{e}_i/e_i \ |$\\ \hspace{6mm}$\ e_i \in E_i, \hat{e}_i \in \hat{E} \}]$ \;\label{line:lockedckt}
    }{\label{line:final_tests}
        $q \gets CheckSec(\key, \lckd)$ \;\label{line:attack}
        \eIf {$q$}{\label{line:attackcondcheck}
            $done \gets$~{\tt True} \;
        }{
            $\key \gets \emptyset$ \;
            $\lckd \gets \ckt$ \;
        }
    }
 }\label{line:end_loop}
 \Return $\lckd, \key$\;
 \caption{HOLL($\ckt, T, Q$)} \label{alg:HOLL}
\end{algorithm}

\algref{HOLL} outlines our algorithm for inferring the key relation and the locked circuit. The algorithm accepts an unlocked design $Y\bimp \ckt(X)$ and a budget $T$.
\subsubsection{Main Algorithm.}
The algorithm iterates, increasing the complexity of the key relation, till the budget $T$ is reached (Lines~\ref{line:start_loop}-\ref{line:end_loop}). In every iteration, the algorithm selects a set of suitable expressions $E$ for locking, uses our synthesis procedure to extract a set of additional latent terms $H$ for the key relation, and produces the mutated expressions $\hat{e}_i$ for each expression $e_i \in E$ (Line \ref{line:synth}). If the additional synthesized relations keep the key relation within the budget $T$ (Line~\ref{line:budget}), the mutated expressions are replaced for $e_i \in E$ (Line~\ref{line:lockedckt}).
HOLL verifies that the solution meets the two objectives of correctness and security (\S\ref{sec:overview}). We handle correctness in the $Synthesize$ procedure of \algref{HOLL} and security in Lines \ref{line:attack}-\ref{line:attackcondcheck} of the same algorithm. The $CheckSec()$ procedure verifies if the synthesized (locked) circuit and key relations satisfy the security condition (Eqn \ref{eqn:security}). If $CheckSec()$ returns \texttt{True}, the key relation $\key$ and the locked circuit $\lckd$ are returned; otherwise, synthesis is reattempted.
\subsubsection{Correctness.}
\begin{figure}[t]
          \begin{subfigure}[b]{0.4\textwidth}
            $(r_0 \leftarrow x_1)$, \\
            $(r_1 \leftarrow x_2)$, $(r_2 \leftarrow x_0)$,\\
            $(r_5 \leftarrow (r_0 \land r_1))$,\\
            $(r_4 \leftarrow ($\fcolorbox{black}{red!20}{$(r_0 \land r_1)$}$\land r_2))$,\\
            $(r_3 \leftarrow ($\fcolorbox{black}{red!20}{$(r_0 \land r_1)$}$\lor r_2))$
          \caption{Without optimization} \label{fig:unoptkey}
          \end{subfigure}
          \hfill
          \begin{subfigure}[b]{0.4\textwidth}
           $(r_0 \leftarrow x_1)$, \\$(r_1 \leftarrow x_2)$,\\
           $(r_2 \leftarrow x_0)$, \\$(r_5 \leftarrow (r_0 \land r_1))$,\\
           $(r_3 \leftarrow ($\fcolorbox{black}{green!20}{$r_5$}$\lor r_2))$,\\
           $(r_4 \leftarrow ($\fcolorbox{black}{green!20}{$r_5$} $\land r_2))$
          \caption{With optimization} \label{fig:optkey}
          \end{subfigure}
          \captionof{figure}{Key relations generated without and with optimization. In the optimized version, $(r_0 \land r_1)$ in $r_3$ and $r_4$ is replaced by reusing the term $r_5$ similar to multi-level logic optimizations.}\label{fig:optrelation}
\end{figure}

    {\toolname} attempts to synthesize (via the $\textit{Synthesize}$ procedure) a key relation $\key$ and a set of locked expressions $\hat{e}_i$ such that the circuit is guaranteed to work correctly for all inputs given to $\key$; this requires us to satisfy:
        \[ \exists \key, \hat{e}_1, \dots, \hat{e}_n.\ \forall X.\ \forall R.\ (\key(X,R)\implies{\lckd}(X,R) = \ckt(X))\tag{5}\label{eqn:algcorrectness}\]
        where $\lckd \equiv \ckt[\hat{e}_1/e_1, \dots, \hat{e}_n/e_n]$, for $e_i \in (E\subseteq \ckt)$. In other words, we attempt to synthesize a set of modified expressions $\hat{E}$ that, once replaced the selected expressions in $E$, produces a semantically equivalent circuit as the original circuit if the relation $\key$ holds.
We solve this synthesis problem via counterexample-guided inductive synthesis (CEGIS)~\cite{alur2015}.
We provide a domain-specific language (DSL) in which $\key$ and $\hat{e}_i$ are synthesized. CEGIS generates candidate solutions for the synthesis problem and uses violations to the specification (i.e. the above constraint) to guide the search for suitable \textit{programs} $\key$ and $\hat{e}_i$.
\newline \indent
        A problem with the above formulation is illustrated in \figref{optrelation}: the key relation in \figref{unoptkey} uses \texttt{5} gates without reusing expressions, ``wasting" hardware resources. \figref{optkey} shows an optimized key relation that reuses the response bit $r_5$, allowing an implementation with only \texttt{3} gates. To encourage subexpression reuse, we solve this optimization problem:
        \begin{multline}
            \underset{\textsf{budget($\key$)}}{\textsf{argmin}} \exists \key, \hat{e}_1, \dots, \hat{e}_n.\ \forall X.\ (\forall R.\ \key(X,R)  \implies {\lckd}(X,R) = \ckt(X))\tag{6}\label{eqn:argmin}
        \end{multline}
        where $\lckd \equiv \ckt[\hat{e}_1/e_1, \dots, \hat{e}_n/e_n]$, for $e_i \in E\subseteq \ckt$.

\subsubsection{Security.}
    The security objective requires that the locking (i.e., the key relation $\key$ and the locked expressions) is non-trivial; that is. there exists some relation $\psi'$:$\psi' \neq \psi$ for which the circuit is not semantically equivalent to the original design:
        \[ \exists \key', \key'\neq \key, \textrm{ s.t. } \ \exists X.\ (\exists R.\ \key'(X,R)\land {\lckd}(X,R) \neq \ckt(X))\tag{7}\label{eqn:securitycheck}\]

    The above constraint is difficult to establish while synthesizing $\psi$; it requires a search for a different relation $\psi'$ that makes $\lckd$ semantically distinct from $\ckt$ while $\psi$ maintains semantic equivalence. Instead, we use a two-pronged strategy:

        \begin{itemize}
        \setlength{\itemsep}{0em}
            \item We carefully design the DSL used to synthesize $\key$ and $\hat{e}_i$ to reduce the possibility they generate trivial relations;
            \item After obtaining $\key$ and $\lckd$, we attempt to synthesize an alternative relation $\key'$ (using \ref{eqn:wrongKey}) such that $\lckd$ is not semantically equivalent to $\ckt$, ensuring that $\key$ and $\lckd$ do not constitute a trivial locked circuit.
        \end{itemize}
\begin{align}
\exists \key'. \ \exists X,R'.\ \ckt(X) \neq \lckd(X, R') \land \key'(X, R')\tag{8}\label{eqn:wrongKey}
\end{align}

The procedure $CheckSec(\key, \lckd)$ (Algorithm \ref{alg:HOLL}, Line \ref{line:attack}) implements the above check (Eqn. \ref{eqn:wrongKey}).

\begin{theorem}
If \algref{HOLL} terminates, it returns a correct (Eqn. \ref{eqn:correctness}) and secure (Eqn. \ref{eqn:security}) locked design.
\end{theorem}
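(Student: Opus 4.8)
The plan is to isolate a loop invariant that captures the correctness requirement and to observe that the single exit point additionally forces the security requirement. Concretely, I would prove by induction on the number of completed iterations of the \textbf{while} loop (Lines~\ref{line:start_loop}--\ref{line:end_loop}) the invariant: \emph{at the head of every iteration, the current pair $(\lckd,\key)$ satisfies $\forall X.\,\forall R.\,(\key(X,R)\implies\lckd(X,R)=\ckt(X))$}, i.e.\ exactly Eqn.~\ref{eqn:correctness}.

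For the base case, the initialization $\key=\emptyset$, $\lckd=\ckt$ makes $\key$ the empty (hence valid) conjunction and $\lckd(X,R)=\ckt(X)$ independently of $R$, so the invariant holds vacuously. For the inductive step I would case-split on the budget test at Line~\ref{line:budget}. If $Budget(\key')\le T$, the algorithm commits the freshly synthesized latent terms $H$ and the mutated expressions $\hat E$ (Line~\ref{line:lockedckt}); here I invoke the contract of $Synthesize$ (it solves Eqn.~\ref{eqn:algcorrectness}, or its optimizing variant Eqn.~\ref{eqn:argmin}), which certifies that under $\key'=\key\cup H$ the rewritten design $\lckd'=\lckd[\hat e_i/e_i]$ stays equivalent to $\lckd$; combining this with the induction hypothesis and the fact that $\key'$---being $\key$ augmented with additional conjuncts---entails $\key$, we obtain $\key'(X,R)\implies\lckd'(X,R)=\lckd(X,R)=\ckt(X)$, which re-establishes the invariant for the updated pair. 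If $Budget(\key')>T$, the pair $(\lckd,\key)$ is not touched before $CheckSec$ is consulted (Line~\ref{line:attack}): should $CheckSec$ fail, the algorithm resets to $\key=\emptyset$, $\lckd=\ckt$ and the invariant holds by the base-case argument; should it succeed, $done$ is set and the loop is left with $(\lckd,\key)$ unchanged, so the invariant is still in force at the \texttt{return}.

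It then remains to discharge the two proof obligations at that exit. The loop terminates only via $done\gets\texttt{True}$ in the $q$-branch at Line~\ref{line:attackcondcheck}, so the returned pair is precisely the one handed to $CheckSec$ in the last iteration; the invariant yields Eqn.~\ref{eqn:correctness} for it at once. For security, $CheckSec(\key,\lckd)=\texttt{True}$ means (by its definition) that Eqn.~\ref{eqn:wrongKey} is satisfiable, i.e.\ there exist $\key',X,R'$ with $\ckt(X)\neq\lckd(X,R')\land\key'(X,R')$; since the consequent $\lckd(X,R')\neq\ckt(X)$ already holds, this same witness satisfies the implication in Eqn.~\ref{eqn:security}, so the returned design is secure. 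Non-termination of the internal CEGIS calls inside $Synthesize$ or $CheckSec$ is immaterial: if any diverges, \algref{HOLL} does not terminate and the hypothesis of the theorem is vacuous.

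I expect the crux to be nailing down the exact contract of $Synthesize$ so that the inductive step is sound---specifically, that what the procedure is guaranteed to certify is equivalence after substituting \emph{every} occurrence of each $e_i$ by $\hat e_i$ inside the already-partially-locked $\lckd$ (not merely inside the pristine $\ckt$), conjoined with the new latent terms $H$, and that the stimulus/latent term discipline on $H$ never introduces a conjunct that would invalidate $\key'\models\key$. A secondary point worth flagging in the writeup is that Eqn.~\ref{eqn:security} as literally stated is vacuously satisfiable, so the substantive statement actually being established is the stronger Eqn.~\ref{eqn:wrongKey} that $CheckSec$ verifies.
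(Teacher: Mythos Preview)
Your proposal is correct and rests on the same two ingredients the paper invokes---the contract of \textit{Synthesize} (Eqn.~\ref{eqn:algcorrectness}) for correctness and the contract of \textit{CheckSec} (Eqn.~\ref{eqn:wrongKey}) for security---so in that sense it is the same approach. The paper's own proof is a single sentence deferring to those two procedures; you have unpacked it into an explicit loop-invariant induction, which is strictly more informative. In particular, your observation that the inductive step hinges on the precise contract of \textit{Synthesize} (equivalence after substitution into the \emph{current} $\lckd$, and $\key'\models\key$) and your remark that Eqn.~\ref{eqn:security} as written is degenerate while Eqn.~\ref{eqn:wrongKey} is the substantive condition are both sharper than anything the paper states.
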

\begin{proof}
The proof follows trivially from the design of the \textit{Synthesize} (in particular, Eqn. \ref{eqn:algcorrectness}) and \textit{CheckSec} (in particular, Eqn. \ref{eqn:wrongKey}) procedures (respectively).
\end{proof}

\subsection{Expression Selection}\label{sec:exprsel}
\begin{figure}[t]
\vspace{-70pt}
\centering
\includegraphics[height=9.2cm]{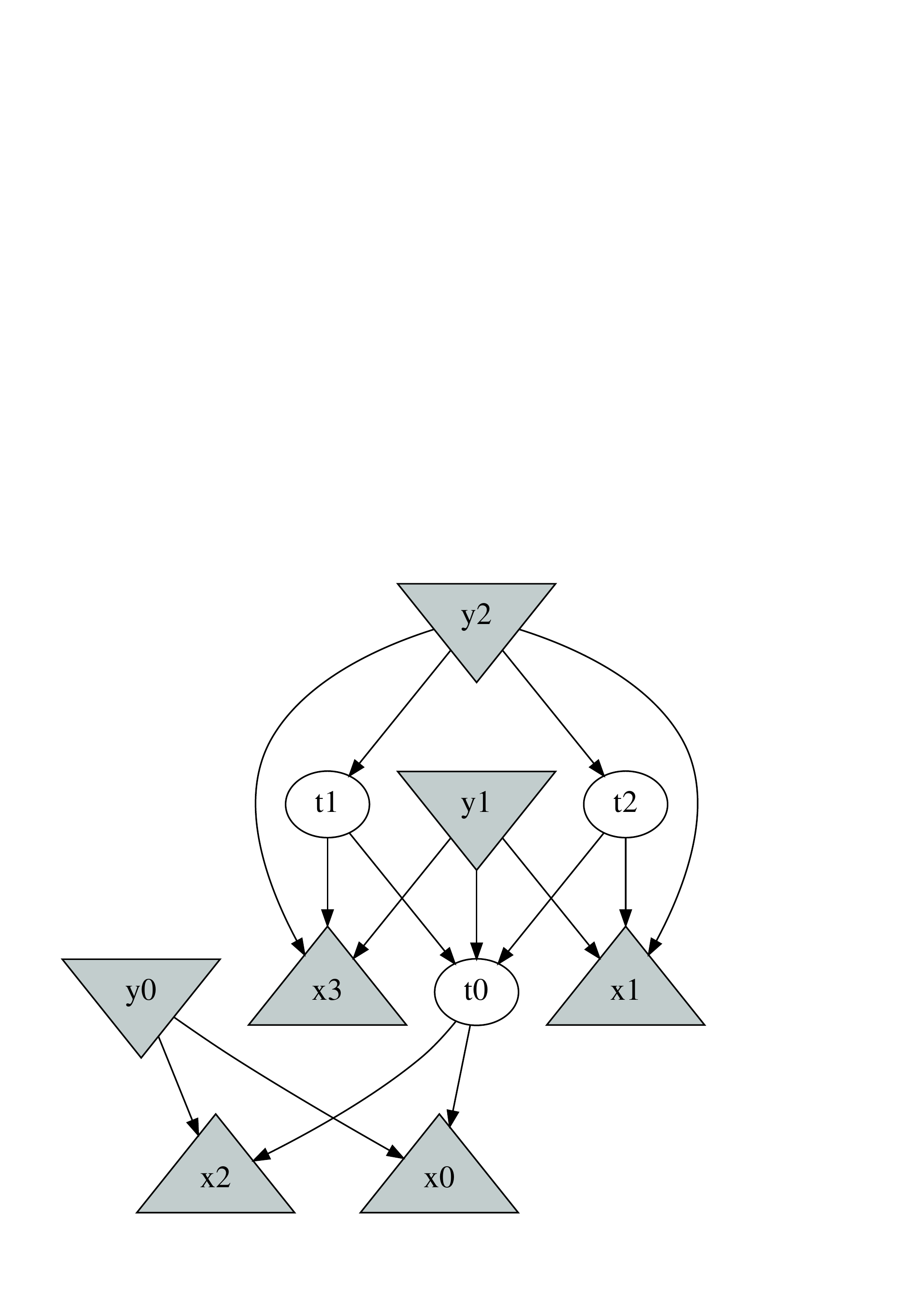}
\caption{Dependency graph for the expressions in \figref{adder}.}
\label{fig:dependencygraph}
\end{figure}
{\toolname} constructs the dependency graph~\cite{ferrante1987} $(V, D)$ for expression selection, where nodes $V$ are circuit variables. A node $v\in V$ represents an expression $e$ such that $v$ is assigned the result of expression $e$, i.e. $(v\gets e)$. The edges $D$ are dependencies: the edge $v_1 \rightarrow v_2$ connects the two nodes $v_1$ to $v_2$ if variable $v_1$ depends on variable $v_2$. The tree is rooted at the output variables and the input variables appear as leaves.

For example, \figref{dependencygraph} shows the dependency graph for the circuit in \figref{adder}. Triangles represent input ports ($x_0,x_1,x_2,x_3$) while inverted triangles represent output ports ($y_0,y_1,y_2$).

Our variable selection algorithm has the following goals:
\begin{enumerate}
\setlength{\itemsep}{0pt}
    \item {\bf Ensure expression complexity}: The algorithm selects an expression $e_z$ as a candidate for locking only if the depth of the corresponding variable $z$ in the dependency graph lies in a user-defined range [L, U] to create a candidate set $Z$. The lower threshold $L$ assures the expression captures a reasonably complex relation over the inputs, while the upper threshold $U$ ensures the relation is not too complex to exceed the hardware budget. The algorithm starts by randomly selecting a variable $z_0\in Z$ from this set.

    \item {\bf Encourage sub-expression reuse in key relation}: We attempt to select multiple ``close" expressions; for the purpose, the algorithm randomly selects variables $w_i\subseteq Z$ on which $z_0$ (transitively) depend on.

    \item {\bf Encourage coverage}: We select expressions for locking in a manner so as to \textit{cover} the circuit. To this end, interpreting $(V, D)$ as an undirected graph, we randomly select expressions $u_i \in Z$ that are \textit{farthest} from $z_0$, i.e. the shortest distance between $u_i$ and $z_0$ is maximized.
\end{enumerate}
Our algorithm first executes step (1), and then, indeterminately alternates between (2) and (3), till the required number of variables are selected. Let us use the dependency graph in \figref{dependencygraph} to show how the above algorithm operates:

\begin{itemize}
\setlength{\itemsep}{-0.5pt}
    \item Given the user-defined range \texttt{[1,3]}, we compose the initial candidate set $Z =\{y_0,t_0,t_1,t_2,y_1,y_2\}$.
    \item Let us assume we randomly pick the expression for $y_2$. Now, $y_2$ depends on expressions $t_0$, $t_1$ and $t_2$ ($\{t_0,t_1,y_2\}\subseteq Z$) [{\em Rule~1}].
    \item We randomly choose new expressions to lock/transform from $\{t_0,t_1,y_2\}$. For example, we select $t_2$ and $t_0$---[{\em Rule~2}].
    \item We find $y_0$, which is the furthest expression from $t_0,t_2,y_2$ in $Z$. We select to lock the set of expressions \{$y_1,y_2,t_0,t_2$\}---[{\em Rule~3}].
\end{itemize}

\section{HOLL: Implementation and Optimization}\label{sec:toolimplimentation}
\subsubsection{Implementation.}
We implemented {\toolname} in Python, using {\sc Sketch}~\cite{sketch} synthesis engine to solve the program synthesis queries. We used {\sc Berkeley-ABC}~\cite{abc} to convert the benchmarks into Verilog and {\sc PyVerilog}~\cite{Takamaeda:2015:ARC:Pyverilog}, a Python-based library, to parse the Verilog code and generate input for {\sc Sketch}.
We use the support for optimizing over a synthesis queries provided by {\sc Sketch} to solve Eqn.~\ref{eqn:argmin}. Algorithm~\ref{alg:HOLL} may not terminate; our implementation uses a timeout to ensure termination.

\subsubsection{Domain Specific Language.}\label{sec:dsl}
We specify our domain-specific language for synthesizing our key relations and locked expressions. The grammar is specified as generators in the {\sc Sketch}~\cite{sketch} language.
The grammar for the key relations and locked expressions is as follows:
\begin{align*}
\B{G} &::= r \leftarrow x\ \vert\ r \leftarrow r \B{Bop} r\  \vert \  r \leftarrow \B{Uop} r\  \vert\  r \leftarrow r\\
\B{Bop} &::= or \  \vert\  and \  \vert \ xor\\
\B{Uop} &::= not
\end{align*}
The rule $\B{G} ::= r \leftarrow x$ is only present in the key relation grammar since the locked expressions have no input bits.

\subsubsection{Backslicing.}\label{sec:backslicingexprsel} To improve scalability, we use backslicing~\cite{venkatesh91} to extract the portion of the design related to the expressions selected for locking. For a variable $v_i$, the set of all transitive dependencies that can affect the value of $v_i$ is referred to as its \textit{backslice}. For example, in \figref{dependencygraph}, $backslice(t2)=\{t_0,x_0,x_1,x_2\}$.

Given the set of expressions $E$, we compute the union of the backslices of the variables in $E$, i.e. all expressions $B$ in the subgraph induced by $e \in E$ in the dependency graph; we use $B \subseteq E$ for lock synthesis.

Backslicing tilts the asymmetrical advantage further towards the {\toolname} defense. The attacker cannot apply backslicing on the locked design since the dependencies are obscured, preventing the extraction of the dependency graph.

\subsubsection{Incremental Synthesis.}\label{sec:incrementalsynt}
Given a set of expressions $E$, the procedure $Synthesis$ in \algref{HOLL} creates a list of relations $H$ and a new set of locked expressions $\hat{E}$. If the list of expressions is large, we select the expressions in the increasing order of their depth in the dependency graph. The lower the depth of the expression is, the closer it is to the inputs, and the simpler is the expression. Selecting an expression with the lowest depth first (say $e_1$) ensures that other expressions ($e_j$) depending on this expression can use the relations $H$ generated during synthesis of $\hat{e}_1$. This also makes synthesizing the other expressions easier as the current relation has some sub-expressions on which the new relations can be built.
\section{SynthAttack: Attacking HOLL with Program Synthesis}\label{sec:attack}
As HOLL requires inference of relations and not values, existing attacks designed for logic locking do not apply.
We design a new attack, SynthAttack, that is inspired by the SAT attack~\cite{7140252} for logic locking and counterexample-guided inductive program synthesis (CEGIS)~\cite{solar-lezama2013}.

\subsection{The SynthAttack Algorithm}
\begin{algorithm}[t]
\SetAlgoLined
    $i \gets 0$ \;
    $Q_0\gets \top$ \;\label{line:Q0}
    \While{$\texttt{True}$}{\label{line:QloopCond}
      $X' \gets Solve_X(Qi \land $\label{line:distingInput}\label{line:distingInput} \\ \hspace{5mm}$(\lckd(X,R_1) \neq \lckd(X,R_2)) \land$ \\ \hspace{5mm}$ \key_1(X,R_1) \land \key_2(X,R_2))$ \;
      \If{$X' = \bot$}{\label{line:breakcond}
        \textbf{break}\;\label{line:break}
      }
      $Y' \gets \ocl(X')$ \;\label{line:distingOutput}
      $Q_{i+1} \gets Q_i \land (\lckd(X',R_1^{i}) \bimp $ \label{line:updatedQ} \\ \hspace{5mm}$ Y') \land (\lckd(X',R_2^{i}) \bimp Y')$ \\ \hspace{5mm}$\land\ \key_1(X',R_1^{i}) \land $ \\ \hspace{5mm}$ \key_2(X',R_2^{i})$ \;
      $i \gets i+1$ \;
    }
    $\key_1, \key_2 \gets Solve_{\key_1, \key_2}(Q_i)$ \;\label{line:correctKey}
    \Return $\key_1$\;
    \caption{SynthAttack($\lckd$, $\ocl$)}\label{alg:SynthAttack}
\end{algorithm}
SynthAttack runs a counterexample-guided inductive synthesis loop: it accumulates a set of examples, $\Lambda$, by discovering some ``interesting" inputs, and uses an activated circuit as an input-output oracle to compute its corresponding response. These examples, $\Lambda$, are used to constrain the space of the candidate key-relations. SynthAttack, then, uses a verification check to confirm if the collected examples are sufficient to synthesize a valid key-relation, i.e. one that provably activates the correct functionality on the locked circuit. Otherwise, the counterexample from the failed verification check is identified as an ``interesting input" to be added to $\Lambda$, and the algorithm repeats. The verification check is based on the existence of a \textit{distinguishing input} \S\ref{sec:overview}.

If there does not exist any distinguishing input for the locked circuit $\lckd$, then $\lckd$ has a unique semantic behavior---and that must be the correct functionality. Any key-relation that satisfies the counterexamples (distinguishing inputs) generated so far will be a valid candidate for the key relation. An inductive synthesis strategy based on distinguishing inputs allows us to quickly converge on a valid realization of the key-relation as each distinguishing input disqualifies many potential candidates for the key relation. Note that (as we illustrate the following example) there may be multiple, possibly semantically dissimilar, realizations of a key-relation that enables the same (correct) functionality on the locked circuit.

SynthAttack is outlined in \algref{SynthAttack}: the algorithm accepts the design of the locked circuit ($\lckd$) and an activated circuit $\lckd(\key)$ (the locked circuit $\lckd$ activated with a valid key-relation $\key$). The notation $\ocl$ indicates that this activated circuit can only be used as an input-output oracle, but cannot be inspected.

SynthAttack runs a counterexample-guided synthesis loop (Line \ref{line:QloopCond}). It checks for the existence of a distinguishing input in Line~\ref{line:distingInput}: if no such distinguishing input exists, it implies that the current set of examples is sufficient to synthesize a valid key-relation. So, in this case, the algorithm breaks out of the loop (Line~\ref{line:breakcond}-\ref{line:break}) and synthesizes a key-relation (Line~\ref{line:correctKey}), that is returned as the synthesized, provably-correct instance for the key relation.

If there exists a distinguishing input $X'$ (in Line~\ref{line:distingInput}), the algorithm uses the activated circuit to compute the expected output $Y'$ corresponding to $X$ (Line~\ref{line:distingOutput}). This new counterexample $(X', Y')$ is used to \textit{block} all candidate key-relations that lead to an incorrect behavior, thereby reducing the potential choices for $\key_1$ and $\key_2$. The loop continues, again checking for the existence of distinguishing inputs on the updated constraint for $Q_i$. The theoretical analysis of SynthAttack is provided in \S\ref{sec:analysisSynthAttack}.

The algorithm only terminates when it is able to synthesize a provably valid key-relation, that allows us to state the following result.

\begin{theorem}\label{thm:SynthAttack}
    Algorithm \ref{alg:SynthAttack} will always terminate, returning a key-relation  $\key_1$ such that $\lckd(\key_1)$ is semantically equivalent to $\lckd(\key)$, where $\key$ is the ``correct" relation hidden by HOLL.
\end{theorem}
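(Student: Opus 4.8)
The plan is to prove the two halves --- termination and correctness --- by pivoting on a single finiteness fact: the locked circuit $\lckd$ can exhibit only finitely many input--output behaviours, since $X$ and $Y$ range over fixed-width bit-vectors. Under the paper's standing simplification (\S\ref{sec:overview}) that the relation bits come from $X$ and $rand$ is a constant, every key relation $\key'$ in the synthesis DSL is \emph{functional}: $\key'(X,R)$ pins down a unique $R$ for each $X$, so $\key'$ induces a well-defined behaviour $f_{\key'}\colon X\mapsto\lckd(X,R)$. I would work with the finite set $\mathcal{B}$ of all such behaviours and, for each loop iteration $i$, with the subset $\mathcal{B}_i\subseteq\mathcal{B}$ of behaviours whose relation satisfies the accumulated constraint $Q_i$. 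Two preliminary observations set up everything else. First, the oracle is sound: $\ocl=\sem{\ckt}$ by Correctness (Eqn.~\ref{eqn:correctness}), so every example $(X',Y')$ folded into $Q_i$ has $Y'=\ckt(X')$. Second, the hidden $\key$ therefore satisfies every $Q_i$ (in either the $\key_1$ or the $\key_2$ slot), so $\mathcal{B}_i\neq\emptyset$ always and the closing call $Solve_{\key_1,\key_2}(Q_i)$ on Line~\ref{line:correctKey} cannot fail; this last point also needs the attacker's DSL to be expressive enough to \emph{contain} $\key$ (or an equivalent relation), which I would state as an explicit hypothesis and justify by noting $\key$ trivially realizes $\sem{\ckt}$, so it holds whenever the attacker searches a space at least as rich as the defender's.

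For termination I would show that the chain $\mathcal{B}_0\supseteq\mathcal{B}_1\supseteq\cdots$ is \emph{strictly} decreasing for as long as the loop does not break. Monotonicity is immediate from $Q_{i+1}\Rightarrow Q_i$. For strictness, suppose iteration $i$ does not break: Line~\ref{line:distingInput} returns a distinguishing input $X'$ with witnesses $\key_1,\key_2$ satisfying $Q_i$ and relation-bit assignments $R_1,R_2$ with $\lckd(X',R_1)\neq\lckd(X',R_2)$ and $\key_1(X',R_1)\wedge\key_2(X',R_2)$; functionality turns this into $f_{\key_1}(X')\neq f_{\key_2}(X')$. The oracle returns the single correct value $Y'=\ckt(X')$, so at least one of the two --- say $\key_1$ --- has $f_{\key_1}(X')\neq Y'$. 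But $Q_{i+1}$ requires of any candidate, in either slot, that some $R$ satisfy $\key'(X',R)\wedge(\lckd(X',R)\bimp Y')$, which is impossible when $f_{\key_1}(X')\neq Y'$, and the same holds for every relation with behaviour $f_{\key_1}$. Hence $f_{\key_1}\in\mathcal{B}_i\setminus\mathcal{B}_{i+1}$. A strictly shrinking family of subsets of the finite set $\mathcal{B}$ runs out after at most $|\mathcal{B}|$ steps, so the loop must exit; since the only exit is the \textbf{break} on Line~\ref{line:break}, the algorithm reaches Line~\ref{line:correctKey} and returns.

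For correctness I would note that, at the moment of breaking, the query on Line~\ref{line:distingInput} is unsatisfiable: there are no DSL relations $\key_1,\key_2$ that both satisfy $Q_i$ while disagreeing on some input, i.e.\ all behaviours in $\mathcal{B}_i$ coincide. Since $f_{\key}\in\mathcal{B}_i$ (preliminary observation) and $f_{\key}=\sem{\ckt}$ by Eqn.~\ref{eqn:correctness}, that common behaviour is $\sem{\ckt}$. The returned $\key_1$ satisfies $Q_i$, so $f_{\key_1}\in\mathcal{B}_i$, hence $\sem{\lckd(\key_1)}=\sem{\ckt}=\sem{\lckd(\key)}$, which is exactly the statement.

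The step I expect to fight with is the strictness argument in the termination proof, because $Q_i$ constrains the key relations only \emph{existentially} over the relation bits: one must rule out the possibility that a behaviour killed at step $i$ re-enters at a later step, and that is precisely where functionality of the DSL relations is doing the work. Dropping the simplifying assumption --- letting $\key'(X,R)$ admit several solutions, hence a \emph{set} of outputs per input --- forces a reformulation in terms of progressively refining those output-sets and checking that each distinguishing input strictly refines at least one of them; I expect this to go through but to be the most technical part. The secondary point I would be careful to surface is the DSL-expressiveness hypothesis noted above, without which the loop could break on a spurious behaviour common to all searchable relations yet different from $\sem{\ckt}$.
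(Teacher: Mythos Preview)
Your proof is correct and considerably more careful than the paper's, but it takes a different route. For termination the paper simply bounds the number of iterations by the number of possible inputs $2^n$: once the example $(X',Y')$ is folded into $Q_{i+1}$, under functionality $X'$ can never again serve as a distinguishing input (both slots are now forced to output $Y'$ on $X'$), so each iteration consumes a fresh input. You instead track the set $\mathcal{B}_i$ of surviving behaviours and show it strictly shrinks. Both arguments lean on the same functionality assumption --- which you make explicit and the paper leaves implicit --- but they give different bounds: the paper's $2^n$ is far tighter than your $|\mathcal{B}|$, although for bare termination either suffices. For correctness the paper just points at the unsatisfiability of the distinguishing-input check; your argument via the single common behaviour in $\mathcal{B}_i$ actually unpacks \emph{why} that check suffices and, importantly, surfaces the hypothesis that the attacker's DSL must contain (a relation equivalent to) the hidden $\key$ --- a point the paper's proof does not mention. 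In short, the paper's proof is terse and numerically sharper on termination; yours is more rigorous and exposes the hidden assumptions.
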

\begin{proof}
The number of loop iterations in the above procedure is upper-bounded by the number of distinguishing inputs and the number of distinguishing inputs is upper-bounded by the number of possible inputs (that is upper-bounded by $\texttt{2}^n$ where $n$ is the number of input of $\lckd$). Hence, termination of the procedure is guaranteed. That the algorithm only returns a valid key-relation is ensured by the check for the existence of distinghuishing inputs at Line~\ref{line:breakcond} in Algrorithm~\ref{alg:SynthAttack}.
\end{proof}

\begin{figure}[t]
\begin{minipage}{0.45\textwidth}
\{$(r_0 \bimp x_1)$,\\
$(r_1 \bimp x_2)$,\\
$(r_2 \bimp r_0)$,\\
$(r_3 \bimp r_2 \land r_1)$,\\
$(r_4 \bimp \lnot r_2 \land r_1)$\}\\
\captionof{figure}{Key relation generated by SynthAttack.} \label{fig:synthkey}
\end{minipage}
\hfill
\begin{minipage}{0.45\textwidth}
\centering
\captionof{table}{Distinguishing inputs.}\label{tab:adderSynthAttack}
\begin{tabular}{ c | c }
\hline
$X$ &$Y$ \\\hline\hline
1101 &100  \\\hline
0001 &001  \\\hline
0101 &010  \\\hline
0111 &100  \\\hline
1001 &011  \\\hline
0011 &011  \\\hline
\end{tabular}
\end{minipage}
\end{figure}
\paragraph{Example. } Executing SynthAttack on \figref{lockadder} iteratively generates six distinguishing inputs as shown in \tabref{adderSynthAttack}.
\figref{synthkey} shows the key relation synthesized by SynthAttack. This key-relation (\figref{synthkey}) is {\bf not} semantically equivalent to the hidden key-relation that was computed and hidden by HOLL (\figref{key}). This shows that there may exist multiple valid candidates for the key-relation that all evoke the same functionality on the locked design. For example, $X=\texttt{0100}$ generates $r_4=\texttt{1}$ for the key relation in \figref{key} but $r_4=\texttt{0}$ for \figref{synthkey}; however, the output of the locked circuit remains the  same in both cases ($Y=\texttt{001}$).

\section{Analysis of SynthAttack}\label{sec:analysisSynthAttack}
The \textit{attack resilience} of HOLL against SynthAttack can be captured by the total execution time ($T$) taken to generate a valid key-relation:
\begin{align}
T \geq \sum\limits_{i=1}^{n} t_i \tag{9}\label{eqn:attackres}
\end{align}
where $n$ is the total number of distinguishing inputs (counterexamples) generated by the algorithm and $t_i$ is the verification time (Algorithm~\ref{alg:SynthAttack}, Line~\ref{line:distingInput}) in the $i^{th}$ iteration of the loop (i.e. the time taken to generate the $i^{th}$ distinguishing input or prove unsatisfiability). We say that a locked circuit, $\lckd$, is attack resilient against SynthAttack if $T$ is \textit{sufficiently} large. One can increase $T$ by either increasing $n$ or $t_i$ (or both).

    \paragraph{Verification Time ($t_i$). } Line~\ref{line:updatedQ} in the Algorithm shows that the formula $Q_i$ grows in each iteration (i.e. with $i$) by more than two times the size of the locked circuit $\lckd$. As the worst-case time for program synthesis increases exponentially with the size of the formula, successive iterations tend to get quite hard. This is also confirmed by our experimental results (\S\ref{sec:attackreselience}). 
    \paragraph{Number of iterations (n). } The number of iterations (n) depends on the number of distinguishing inputs accumulated till the verification check returns \textsc{Unsat}. The number of distinguishing inputs is upper-bounded by the minimum of the number of possible inputs (that is exponential in the number of input ports) and the number of possible functional relations that can be candidates for the key-relation (which is exponential in the number of relation bits). We attempt to increase $n$ by controlling the hardware budget for the key-relation and the expression selection heuristics (\S\ref{sec:exprsel}). Unfortunately, it is difficult to establish a non-trivial lower-bound; it depends on the design of the locked circuit and that of the key relation.

\section{Hardware Implementation}\label{sec:hardware_architecture}
We discuss how to hide the {\em key relation} from the untrusted foundry in a provably secure way. We need a solution to store the key relation as a {\em program} (not simply key bits), which must be hidden during chip fabrication. For this, we borrow concepts from programmable logic circuits that can implement any relation provided by a customer.
The use of programmable devices is common in the hardware industry~\cite{efpgaMarket}.

    Depending on the key relation, we can use different solutions,
    based on the size and complexity of the key relation:

    We can implement a simple relation with an embedded {\em Programmable Array Logic} (PAL). A PAL device includes a set of elements that can be programmed by the users with specialized machines. The transistors are arranged in a ``fixed-OR, programmable-AND`` plane to implement the well-known {\em sum-of-products} (SOP). In Boolean algebra, any relation $\mathcal{R}$ can be expressed with an SOP form, also known as {\em disjunctive normal form}. The SOP form is then a canonical representation of the relation $\mathcal{R}$. Most common logic synthesis tools can easily extract the SOP form associated with each relation bit $r_i$ to program the PAL. For example, given the relation $r_3 \leftarrow r_0 \land r_1 \lor r_2$ of \figref{unoptkey}, the corresponding SOP minterms are shown in \figref{pla_sop}, while \figref{pla_ckt} shows its implementation with a single-output PAL.

    \begin{figure}[t]
    \begin{subfigure}[b]{0.4\textwidth}
    $a =~\overline{r_0}\cdot\overline{r_1}\cdot r_2$ \\
    $b =~\overline{r_0}\cdot r_1\cdot r_2$  \\
    $c =~r_0\cdot\overline{r_1}\cdot r_2$ \\
    $d =~r_0\cdot r_1\cdot\overline{r_2}$ \\
    $e =~r_0\cdot r_1\cdot r_2$  \\
    $r_3 = a + b + c + d + e$\\
    \caption{Minterms and SOP form of the relation $r_3$.}\label{fig:pla_sop}
    \end{subfigure}
    \begin{subfigure}[b]{0.4\textwidth}
    \includegraphics[width=\textwidth]{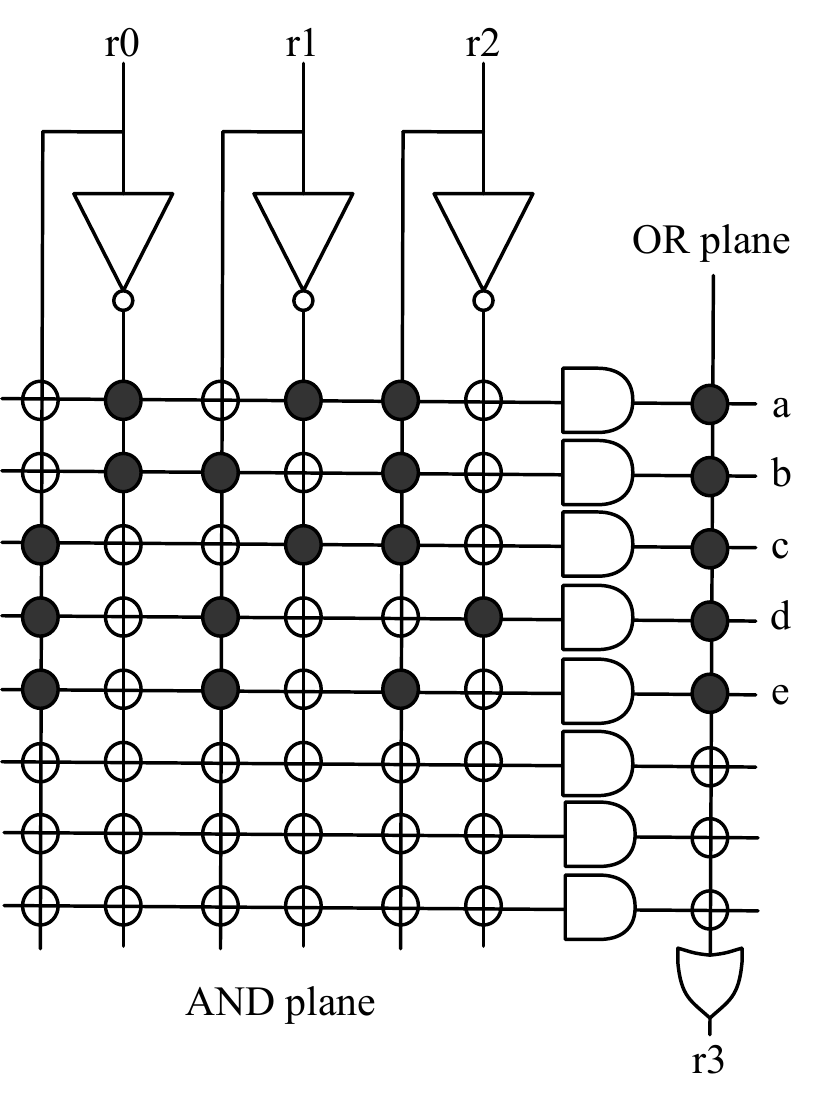}
    \caption{PAL implementation.}\label{fig:pla_ckt}
    \end{subfigure}
    \caption{Example of implementation for a key relation generated by HOLL.}\label{fig:pla}
    \end{figure}

    PAL devices are fast and simple to program, but can only support small circuits. For example, the simple expression above has three literals but requires five minterms (so taking five gates in the AND plane). If the designer requires more complex relations that require more logic resources and even also storing capabilities, she can use an embedded {\em Field-Programmable Gate Array} (eFPGA). An eFPGA is a device composed of several configurable logic blocks, flip-flop registers, and heterogeneous resources interconnected with a reprogrammable fabric. In this case, the designer must use specific toolchains from the eFPGA provider to generate the proper configuration bitstream, i.e. the sequence of bits needed to specify the desired functionality~\cite{8892171}.

    In both cases, implementing a relation with programmable devices requires translating the specification into a configuration for all cells and/or switches. The more configurable cells, the more complex key relations we can implement. Without the correct configuration of all elements, the device cannot reproduce the correct relation. The attacker cannot know how many and which gates are used in the programmable device to implement the key relation. So, these devices can be a valid solution to securely store the key relation.

    We end this section with an important remark: the \textit{eFPGA configuration} can be represented as a sequence of \textit{configuration} bits. However, such a configuration bit-sequence represents a \textit{function}, as compared to the key bits in traditional logic locking that simply represent a \textit{value}. While any function can be encoded as a bit-sequence~\cite{clift2020, petersen2019}, inferring a function bit-sequence is a computationally much harder problem (a second-order problem) as compared to inferring a value bit-sequence (a first-order problem). In the case of functions, only a complete configuration bit recovery allows the attacker to unlock the function. The complexity of inference lies in the complete interpretation of the entire bit-sequence, not in its simple or partial representation.

    The circuit configurations from HOLL are one to two orders larger than the typical sizes of key bits in logic locking even when the key relation has a limited size~\cite{Deepak19}.
    Further, as the attacker does not know how the gates are used in the programmable device to implement the key relation, an attacker will have to consider all possible gate allocations, resulting in an impractical search over a huge design space.

\section{Experimental Evaluation} \label{sec:experimentaleval}

We selected \texttt{100} combinational benchmarks from ISCAS'85~\cite{iscas85} and MCNC~\cite{yang1991} and report the time for program synthesis and the overhead after applying our locking method. For long running experiments, we select a subset of \texttt{10} randomly selected benchmarks 
where, number of input ports range between 16 and 256, output ports range range between 7 and 245, \texttt{AND} gates range between 135 and 4174.
\tabref{benchmarks} summarizes the ten selected benchmarks. We use {\sc Berkeley-ABC} to get the number of \texttt{AND} gates after we flatten the circuit using \texttt{strash} command.
\begin{table}[t]
\centering
\caption{Our selected benchmark set.}\label{tab:benchmarks}
\begin{tabular}{ c | c c c | c}
\hline
    & \multicolumn{3}{|c|}{\bf Ports} &    \\
   {\bf Bench} & {\bf \#IN} & {\bf \#OUT} & {\bf \#Total} & {\bf \#Gates} \\ \hline
     {\tt al2} &16 &47 &63 &135 \\
     {\tt cht} &47 &36 &83 &185 \\
     {\tt C432} &36 &7 &42 &212  \\
     {\tt C880} &60 &26 &86 &330 \\
     {\tt i9} &88 &63 &151 &682 \\
     {\tt i7} &199 &67 &266 &763 \\
     {\tt x3} &135 &99 &234 &816 \\
     {\tt frg2} &143 &139 &282 &1,233 \\
     {\tt i8} &133 &81 &214 &2,175 \\
     {\tt des} &256 &245 &501 &4,174 \\
    \hline
\end{tabular}
\end{table}

For our experiments, we use the number of \textit{relation} terms in the key relation as a proxy for its complexity. We define our \textit{budget} in terms of it and use a budget threshold for the key relation in the range \texttt{[12-14]} latent terms and depth of expression selection in range \texttt{[2-4]}. We use a timeout of \texttt{20} minutes for locking and \texttt{4} days for attacks. We conduct our experiments on a machine with \texttt{32}-Core Intel(R) Xeon(R) Silver \texttt{4108} CPU @ \texttt{1.80}GHz with 32GB RAM.

For both HOLL and SynthAttack, we use the {\sc Sketch} synthesis tool. Since synthesis solvers are difficult to compare across different problem instances, we were wary of the case where the defender gets an edge over the attacker due to use of different tools. We create the attack-team-defence-team asymmetry by controlling the computation time: while the defender gets \texttt{20} minutes (\texttt{1200}s) to generate locked circuit, the attacker runs the attack for up to \texttt{4} days.

Our experiments aim to answer five research questions:
\begin{description}
\setlength{\itemsep}{0pt}
 \item[\bf RQ1.] What is the attack resilience of HOLL? (\S\ref{sec:attackreselience})
 \item[\bf RQ2.] How do impact expression selection heuristics affect attack resilience?~(\S\ref{sec:attackvsexprsel})
 \item[\bf RQ3.] What is the hardware cost for HOLL? (\S\ref{sec:hardwarecosteval})
 \item[\bf RQ4.] What is the time taken to synthesize the locked design and key-relation for HOLL? (\S\ref{sec:inferencetime})
 \item[\bf RQ5.] What are the impact of the optimizations for scalability (backslicing and incremental synthesis)? (\S\ref{sec:impactoptimization})
\end{description}
Here is a summary of our findings:

\noindent\fbox{
\parbox{0.95\textwidth}{
    {\bf Security.} The key relations can be recovered completely by the attacker via SynthAttack but only for small circuits with a small hardware budget. For medium and large designs, key relations are fast to obtain ($<$\texttt{1200}s) but cannot be recovered by our attack even within \texttt{4} days. This shows our defense is efficient while our attack is strong but not scalable.

   \vspace{4pt}{\bf Hardware Cost.} Our key relations with a budget of \texttt{12-14} latent terms have a minimal impact on the designs and the overhead reduces as the size of the circuit grows. On the largest benchmark, the area overhead is \texttt{1.2}\%. The corresponding configurations for programmable devices are small and provide high security.

   \vspace{4pt}{\bf HOLL Performance.} The HOLL execution time ranges between \texttt{8}s and \texttt{1001}s, with an average of \texttt{33}s for small, \texttt{17}s for medium, and \texttt{60}s for large designs for the budget of \texttt{8-10} latent terms. Our optimizations are crucial for the scalability of our HOLL defense (locking) algorithm: we fail to lock enough expressions in large circuits without these optimizations.
}
 }



\subsection{Attack Resilience}\label{sec:attackreselience}
We define \textit{attack resilience} of locked circuit, $\lckd$, in terms of time taken to obtain a key relation, $\key'$, such that $\lckd$ $\land$ $\key'$ is equivalent to original circuits, $\ckt$. 

\subsubsection{Attack time.}
\figref{synthattacktrend} shows the cumulative time spent till the $i^{th}$ iteration (y-axis) of the loop versus the loop counter~$i$, that is also the number of distinguishing inputs (samples) generated so far (x-axis). We show exponential \textit{trend} curves (as a solid \textcolor{magenta}{pink} line) to capture the trend in the plotted points while the data-points are plotted as \textcolor{blue}{blue} dots.
The plots show that the plotted points follow the exponential trend lines, illustrating that SynthAttack does not scale well, thereby asserting the resilience of HOLL.

SynthAttack failed to construct a valid key-relation for any of these ten designs within a timeout of \texttt{4} days. However, for small designs with lesser number of latent terms, SynthAttack was indeed able to construct a valid key-relation. For example, SynthAttack is able to construct a valid key-relation for the benchmarks {\tt al2} and {\tt i9} within \texttt{10} hours for {\tt \texttt{7}} latent terms, respectively (see Figure~\ref{fig:latentvstimeall}).

\begin{figure}[t!]
\centering
\begin{subfigure}[b]{0.45\textwidth}
\includegraphics[width=1\textwidth]{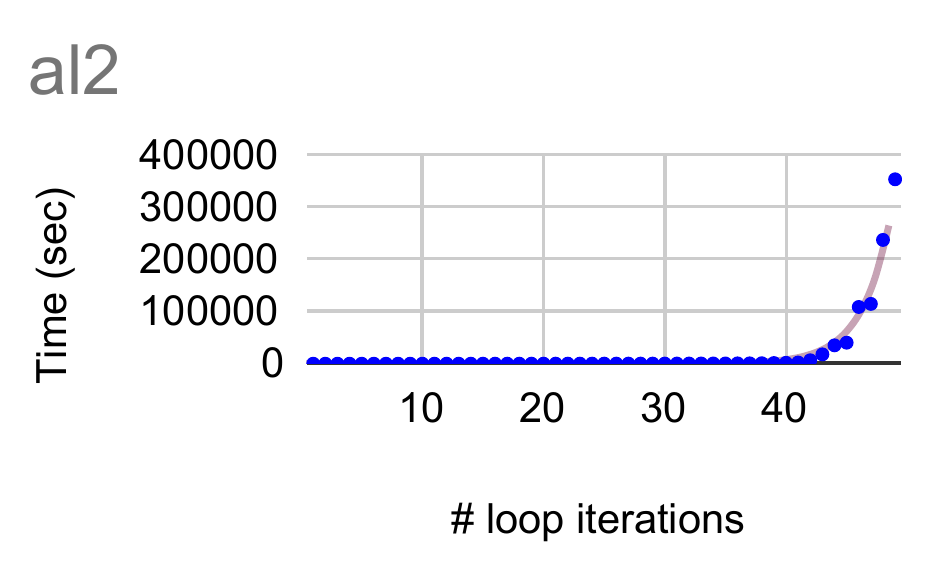}
\end{subfigure}
 \begin{subfigure}[b]{0.45\textwidth}
 \includegraphics[width=1\textwidth]{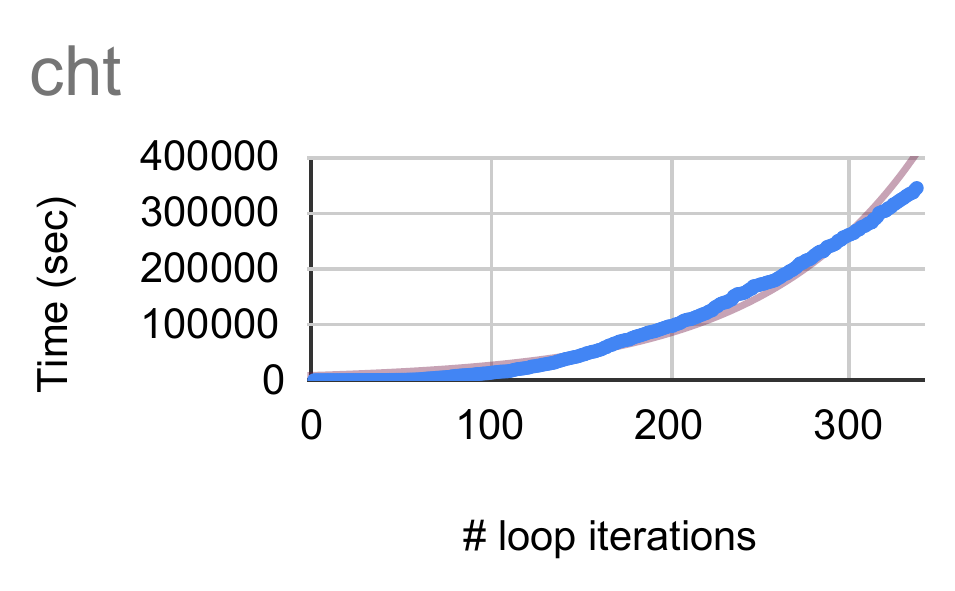}
 \end{subfigure}
\begin{subfigure}[b]{0.45\textwidth}
\includegraphics[width=1\textwidth]{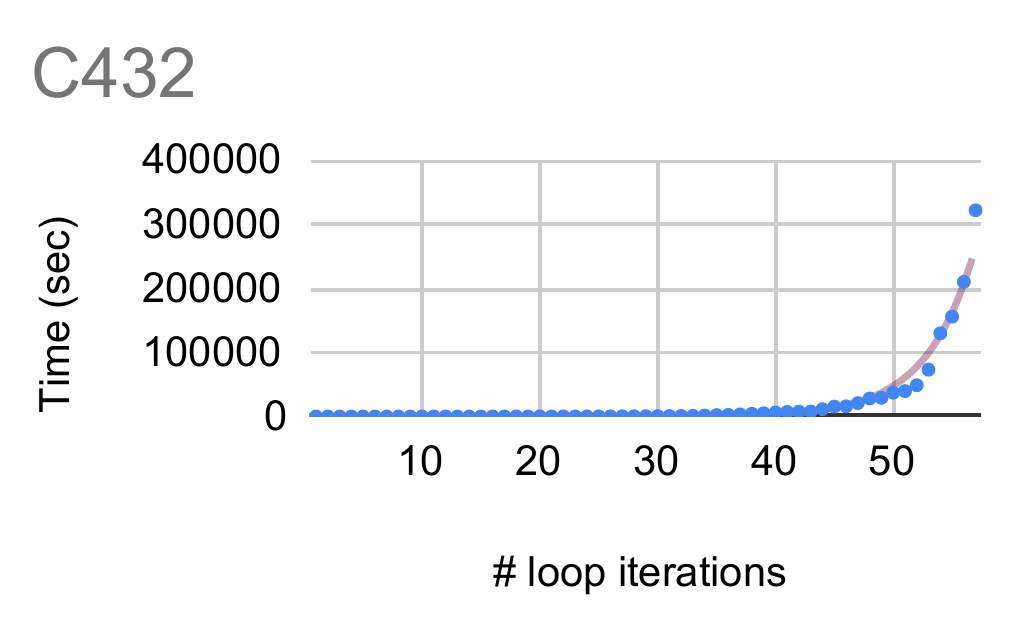}
\end{subfigure}
 \begin{subfigure}[b]{0.45\textwidth}
 \includegraphics[width=1\textwidth]{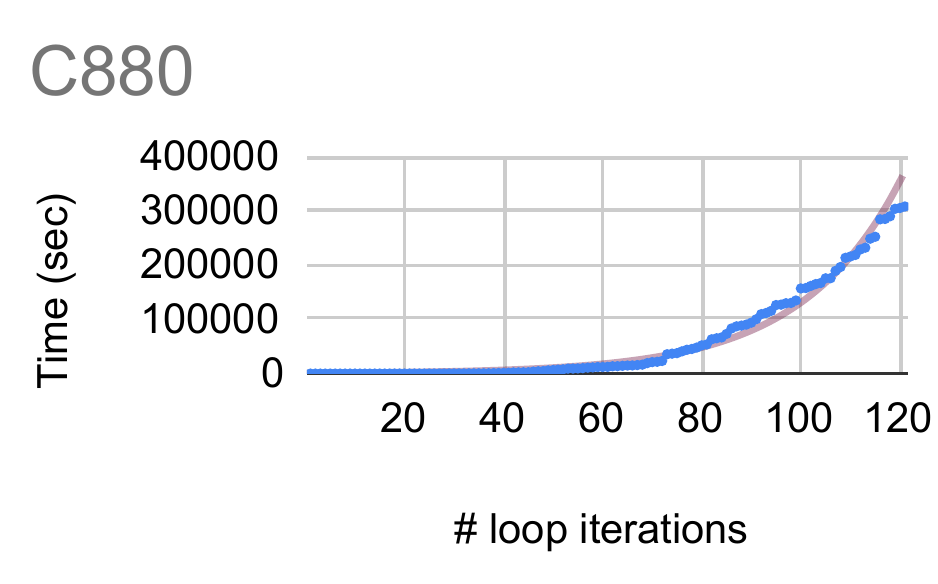}
 \end{subfigure}
\begin{subfigure}[b]{0.45\textwidth}
\includegraphics[width=1\textwidth]{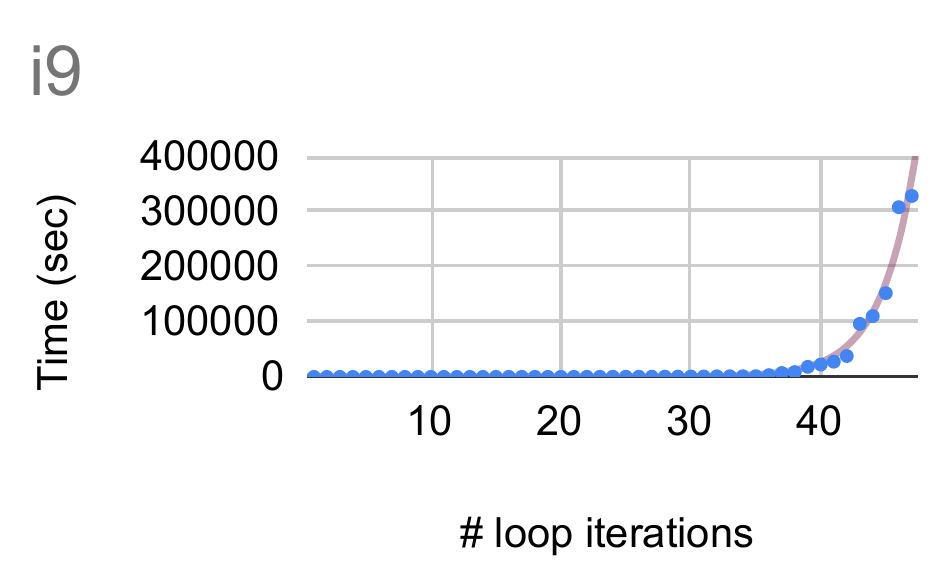}
\end{subfigure}
 \begin{subfigure}[b]{0.45\textwidth}
 \includegraphics[width=1\textwidth]{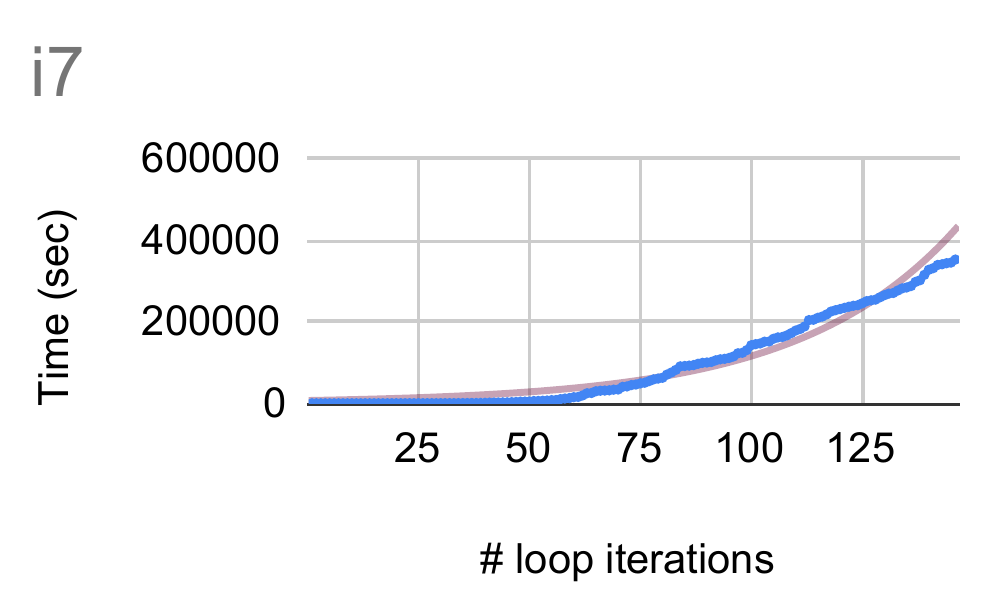}
 \end{subfigure}
 \begin{subfigure}[b]{0.45\textwidth}
 \includegraphics[width=1\textwidth]{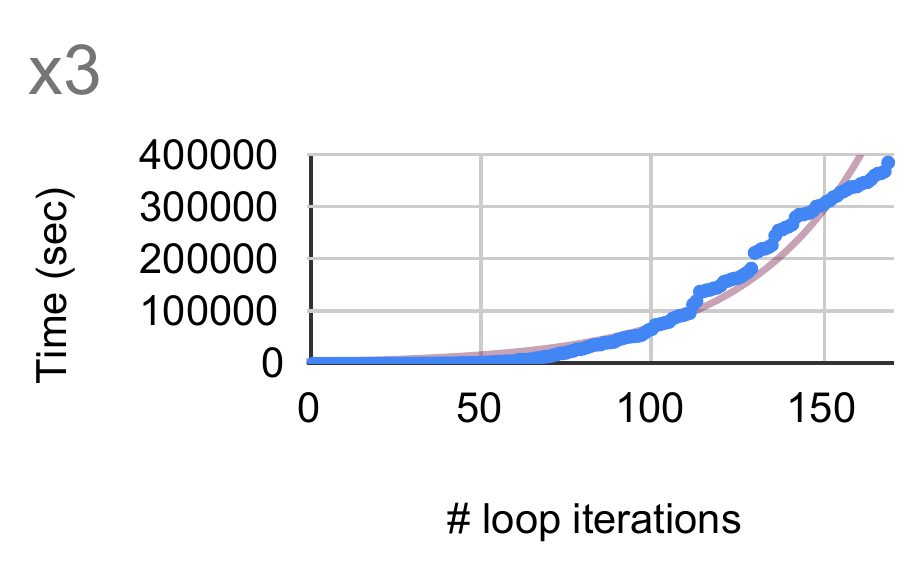}
 \end{subfigure}
 \begin{subfigure}[b]{0.45\textwidth}
 \includegraphics[width=1\textwidth]{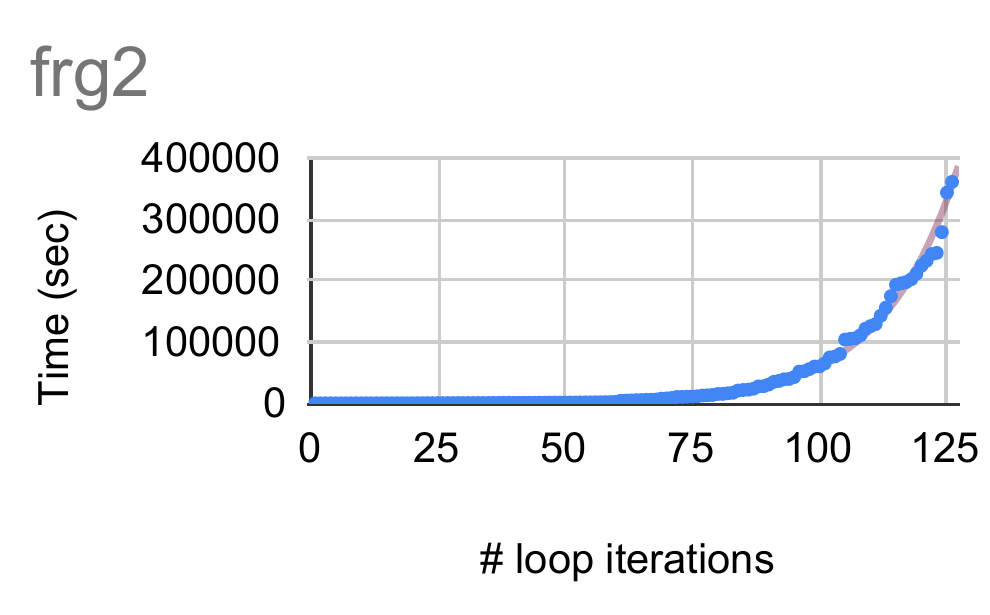}
 \end{subfigure}
 \begin{subfigure}[b]{0.45\textwidth}
 \includegraphics[width=1\textwidth]{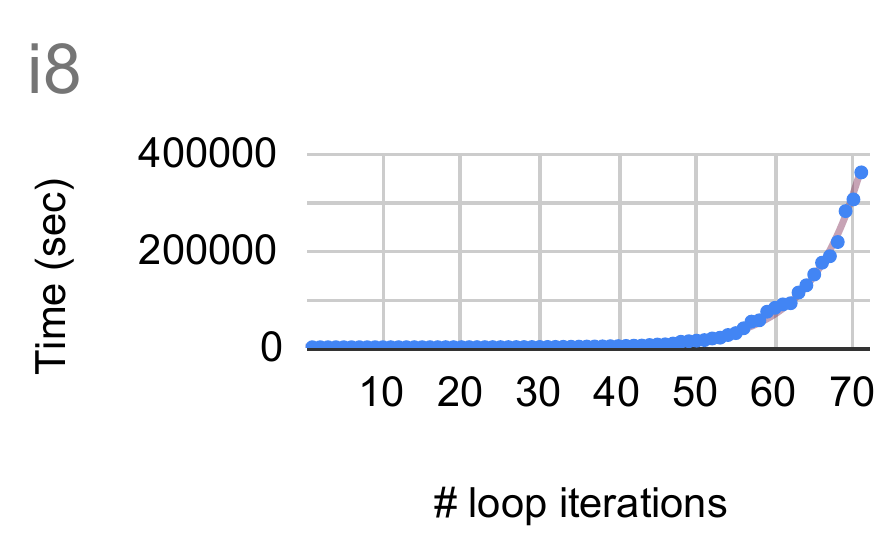}
 \end{subfigure}
\begin{subfigure}[b]{0.45\textwidth}
\includegraphics[width=1\textwidth]{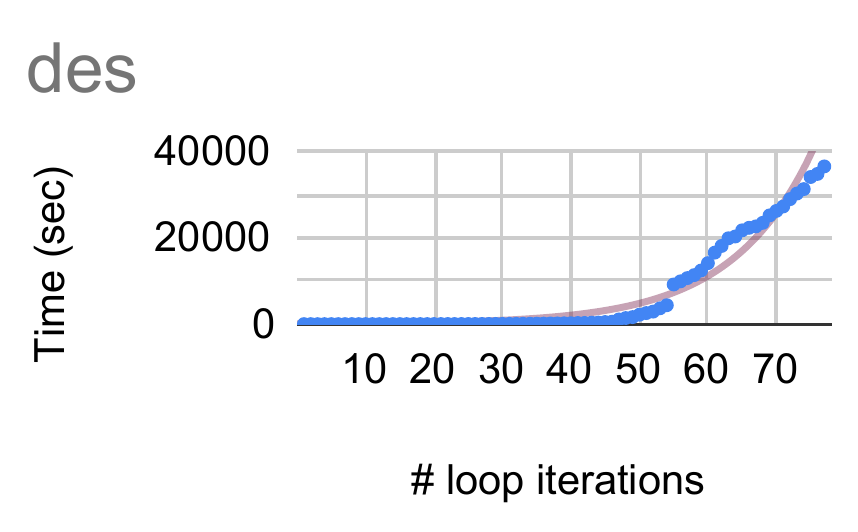}
\end{subfigure}
    \caption{figure}{Cumulative time for successive iterations of SynthAttack (best viewed in color)}\label{fig:synthattacktrend}
\end{figure}
\hfill
\begin{figure}[t]
\centering
\begin{subfigure}{0.45\textwidth}
    \centering
    \includegraphics[width=1\textwidth]{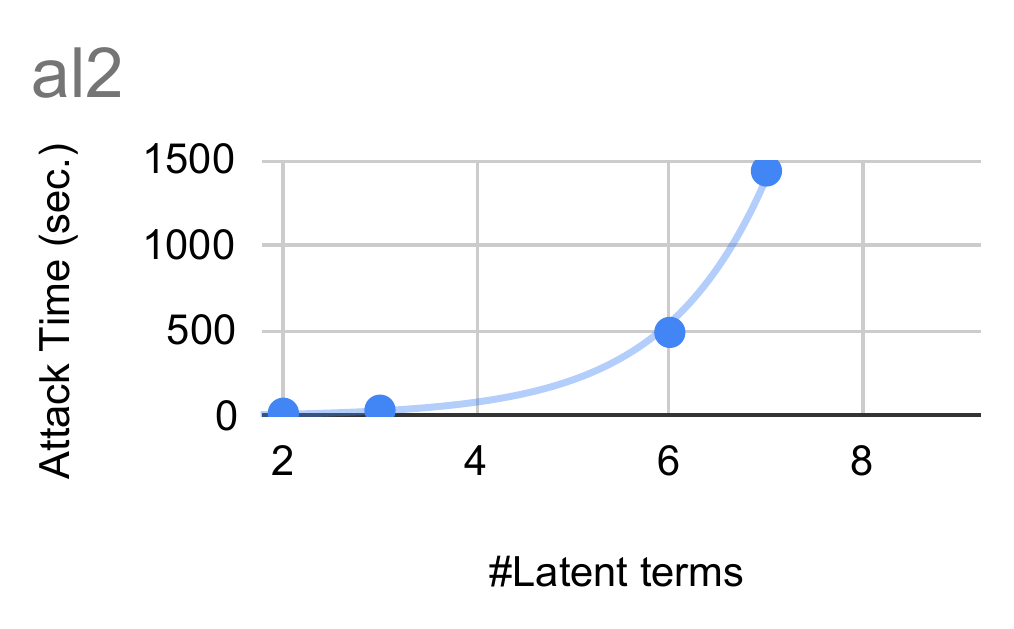}
\end{subfigure}
\begin{subfigure}{0.45\textwidth}
    \centering
    \includegraphics[width=1\textwidth]{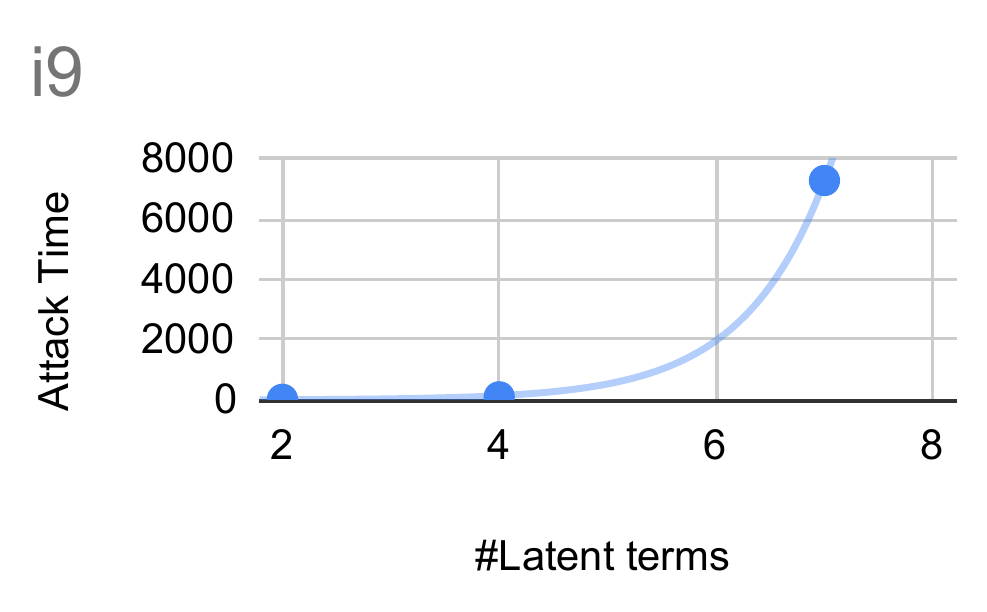}
\end{subfigure}
\caption{figure}{Attack time vs \#latent terms for {\tt i9} and {\tt al2}}\label{fig:latentvstimeall}
\end{figure}

\subsubsection{Attack resilience vs. number of latent terms.}

The complexity of the key relation increases with the number of relation bits. As shown in \figref{latentvstimeall} (for benchmarks {\tt al2} and {\tt i9}), the time required to break the locked circuit increases exponentially as the number of relation bits increases. We gave a timeout of \texttt{10} hours for this experiment and {\tt al2} timed out at \texttt{9} latent terms, and {\tt i9} timed out at \texttt{8} latent terms. Both results are for locked circuits with variables selected with the depth of locked expression, $\hat{e_i}$, equal to \texttt{1}.

\subsection{Impact of Expression Selection on Attack Resilience}\label{sec:attackvsexprsel}
\subsubsection{Attack resilience vs. Depth of locked expression.}\label{sec:attackvsdepth}
The attack resiliency of $\lckd$ increases significantly as we increase the depth of the locked expression selected for HOLL for $\ckt$.
We observe that for a number of latent terms in key relation equal to 2, for benchmark {\tt al2}, increases from 213s to 3788s for depth 1 and 2, respectively. For benchmark {\tt i9}, attack time increases from 351s to 1141s for depth 1 and 2, respectively.

\subsubsection{Attack time vs. Coverage.}\label{sec:attackvscoverage}
To show the effect of coverage we select expressions (in $e_i \in E$) such that the distance (\S\ref{sec:exprsel}) among the expressions is largest (termed as diverse) and smallest (termed as converged).
When the coverage of locked expressions is larger, i.e. locked expressions are spread out in the circuit, the attack requires more time than breaking locked expressions limited in one locality. For example, for benchmarks {\tt C432} and {\tt i9}, attack time increases from 115s to 142s and 229s to 316s, respectively, when expression selection heuristic is changed from converged to diverse.
The results are with three latent terms.

\subsection{Hardware cost}\label{sec:hardwarecosteval}
\begin{table}[t]
\caption{Hardware Impact of HOLL.}\label{impact_of_locking}
\centering
\begin{tabular}{ c | X | X X X | X }
\hline
& \multicolumn{1}{c|}{\bf Orig.} & \multicolumn{3}{c|}{\bf Key Relation} & {\bf Over- head} \\
{\bf Bench} & {\bf Area ($\mu m^2$)} &	{\bf Area ($\mu m^2$)} 	&	{\bf \#Eq. LUT}	& {\bf \#Eq. conf. bits} & {\bf Area (\%)} \\\hline
{\tt al2}	&	17.89	&	4.473	&	138	&	8,832 &	25.0\\
{\tt cht}	&	20.74	&	4.178	&	132	&	8,448 & 20.1	\\
{\tt C432}	&	20.05	&	4.866	&	150	&	9,600 &	24.3\\
{\tt C880}	&	50.04	&	4.325	&	132	&	8,448 &	8.6\\
{\tt i9}	&	77.07	&	4.129	&	126	&	8,064 &	5.4\\
{\tt i7}	&	80.41	&	4.129	&	126	&	8,064 &	4.3\\
{\tt x3}	&	95.21	&	5.014	&	156	&	9,984 &	5.3\\
{\tt frg2}	&	100.81	&	4.669	&	144	&	9,216 &	4.6	\\
{\tt i8}	&	120.37	&	4.325	&	132	&	8,448 &	3.6\\
{\tt des}	&	445.37	&	5.554	&	174	&	11,136 &	1.2\\
\hline
\end{tabular}
\end{table}
The key relations can be implemented either as embedded Field Programmable Gate Array (eFPGA) or Programmable Array Logic.

We synthesize the original and locked designs with {\sc Synopsys Design Compiler} R-\texttt{2020}.\texttt{09}-SP1 targeting the Nangate \texttt{15}nm ASIC technology at standard operating conditions (\texttt{25}$^{\circ}$C).

Table~\ref{impact_of_locking} provides the estimated cost for implementing the key relations with programmable devices. To do so, we compute the number of equivalent NAND2 gates used to estimate the number of \texttt{6}-input LUTs. Given the number of LUTs, we provide an estimation of the equivalent number of configuration bits--including those for switch elements.
Results show that the size of the key relations is independent of the size of the original design.
Table \ref{impact_of_locking} reports the fraction of the area locked with {\toolname} (\textit{key relation}) to the area of the {\em original} circuit. The results show that the impact of HOLL is low, especially for large designs.

\subsection{HOLL Lock Inference Time}\label{sec:inferencetime}
\figref{synthtime} shows the size of the circuit has minimal impact on the time for inferring the key relations, thanks to backslicing (\S\ref{sec:backslicingexprsel}). The number of gates in the benchmarks increases as we move along the x-axis, which is on a logarithmic scale. The data is for \texttt{100} benchmarks with the number of latent terms in the range $\texttt{[8,10]}$. \figref{keybittime} shows the number of relation bits has a linear impact on the time for key relation inference. We select expression selection depth in range $\texttt{[2-4]}$. HOLL timeouts (\texttt{20} minutes) for inferring the key relation when lower limit of depth is increased to \texttt{3} or upper limit to \texttt{5} for benchmark {\tt al2}. We try to keep the depth as high as possible so that the attack resilience for SynthAttack is more (\S\ref{sec:attackvsexprsel})

\begin{figure}[t]
\centering
\begin{subfigure}[b]{0.45\textwidth}
\includegraphics[width=1\textwidth]{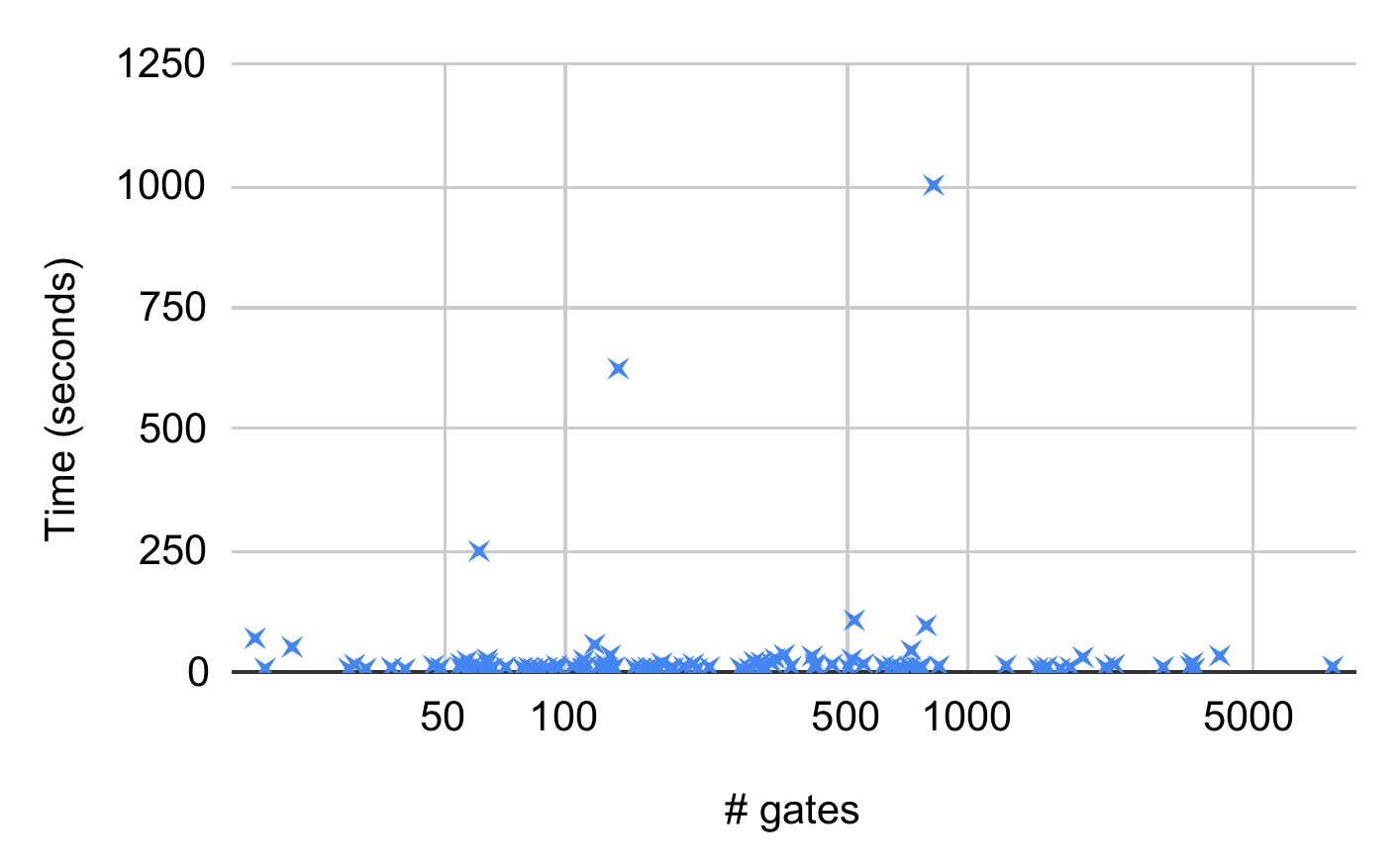}
\caption{Time for 100 benchmarks}\label{fig:synthtime}
\end{subfigure}
\begin{subfigure}[b]{0.45\textwidth}
\includegraphics[width=1\textwidth]{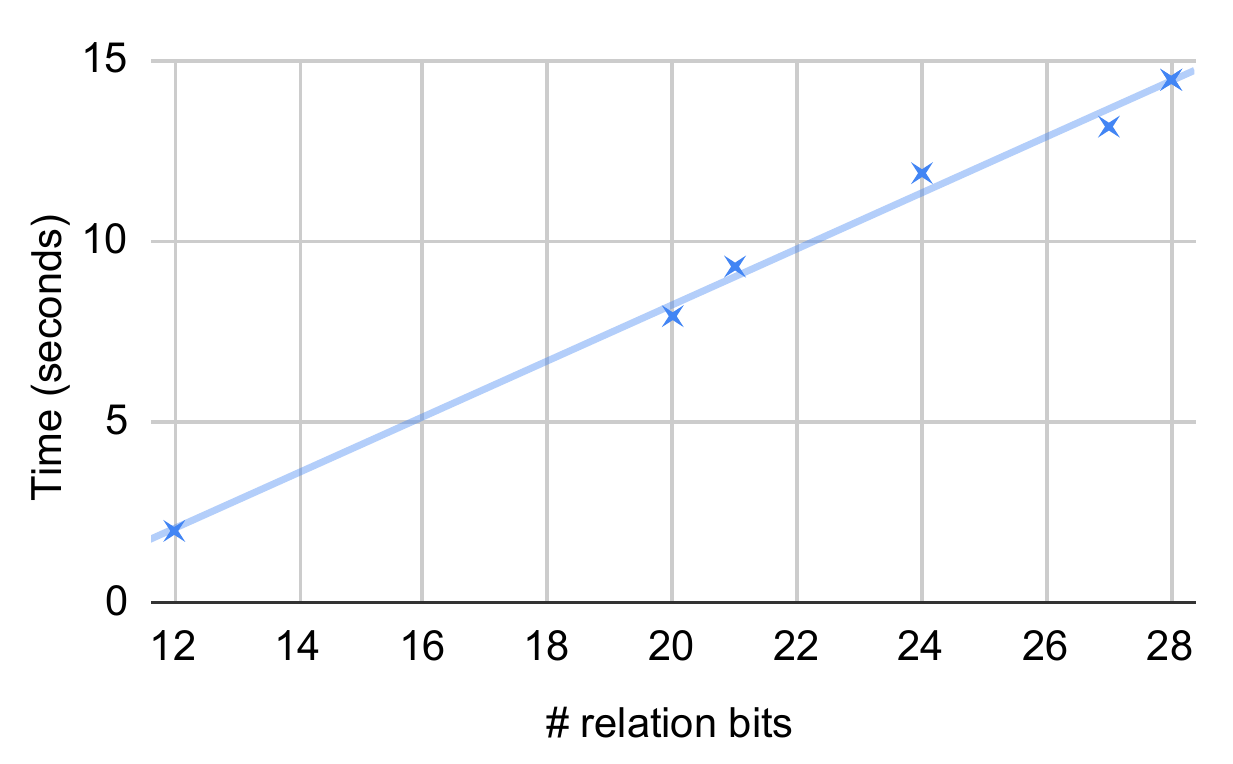}
\caption{Time vs \# relation bits for {\tt al2}}\label{fig:keybittime}
\end{subfigure}
\caption{Execution time for program synthesis.}\label{fig:time}
\end{figure}

\subsection{Impact of HOLL Optimizations}\label{sec:impactoptimization}
\subsubsection{Incremental vs. Monolithic Synthesis}

This experiment shows the advantage of incremental synthesis versus monolithic synthesis (\S\ref{sec:incrementalsynt}).
As can be seen in \tabref{incrementalvsmono} for benchmark {\tt x3}, monolithic synthesis does not scale beyond two expressions (even with a \texttt{1} day timeout for synthesis) while incremental synthesis synthesizes key relations for \texttt{7} locked expressions within \texttt{15}s. We provided exactly the same set of expressions in both cases. While the results of incremental synthesis may be sub-optimal, we found them close to the optima in most cases. \figref{keyrelationx3} shows the key relations synthesized by the two settings when the threshold for the number of expressions is \texttt{2}: both the algorithms generate compact relations with \texttt{8} terms (five stimulus and three latent terms) and using \texttt{8} relation bits.


\begin{table}[t]
\caption{Impact on HOLL time.}
\centering
\begin{subtable}{.4\textwidth}
\centering
\caption{Incremental for {\tt x3}} \label{tab:incrementalvsmono}
\begin{tabular}{| c | c | c | c | c | c | c | c | c |}
\hline
\centering
 \multirow{2}{*}{\bf Time (sec.)}& \multicolumn{7}{|c|}{\bf \# of expressions}\\\cline{2-8}
&{\bf 1} &{\bf 2} &{\bf 3} &{\bf 4} &{\bf 5} &{\bf 6} &{\bf 7}\\\hline\hline
Monolithic &1 &31 &\multicolumn{5}{|c|}{timeout 4 hrs} \\\hline
Incremental &1 &2 &8 &9 &12 &13 &14 \\\hline
\end{tabular}
\end{subtable}
\hspace{5mm}
\begin{subtable}{0.55\textwidth}
\centering
\caption{Impact of backslicing (bkslice) for {\tt al2}} \label{tab:backslicing}
\begin{tabular}{|l | l | l | l | l | l | l | l | l |}
\hline
\centering
\multirow{2}{*}{\bf Time (sec.)}& \multicolumn{7}{|c|}{\bf \# of expressions}\\\cline{2-8}
& {\bf 1} &{\bf 2} &{\bf 3} &{\bf 4} &{\bf 5} &{\bf 6} &{\bf 7}\\\hline
 with bkslice &4 &\multicolumn{6}{|c|}{timeout 4 hrs} \\\hline
 w/o bkslice &1 &225 &512 &513 &514 &548 &642 \\\hline
\end{tabular}
\end{subtable}
\end{table}

\begin{figure}[!]
\centering
\begin{subfigure}[b]{0.45\textwidth}
\{$(r_0 \bimp i0), (r_1 \bimp j0),$ \\
$(r_2 \bimp d2), (r_3 \bimp m0),$ \\
$(r_4 \bimp l0), (r_5 \bimp (r_1 \lor r_0)),$ \\
$(r_6 \bimp (\lnot r_3)), (\\
r_7 \bimp (r_2 \land r_4))$\}
\caption{Monolithic synthesis}\label{fig:monolithicx3}
\end{subfigure}%
\begin{subfigure}[b]{0.45\textwidth}
\{$(r_1 \bimp i0), (r_2 \bimp j0),$ \\
$(r_3 \bimp d2), (r_0 \bimp m0),$ \\
$(r_4 \bimp l0), (r_6 \bimp (r_2 \lor r_1)),$ \\
$(r_5 \bimp (\lnot r_0)), \\
(r_7 \bimp (r_3 \land r_4))$\}
\caption{Incremental synthesis}\label{fig:incrementalx3}
\end{subfigure}
\caption{Key relations generated for two expressions of {\tt x3}.}\label{fig:keyrelationx3}
\end{figure}

\subsubsection{Backslicing}
 Backslacing (\S\ref{sec:backslicingexprsel}) is a critical optimization as shown in \tabref{backslicing}. For example, if backslicing is disabled for the benchmark {\tt al2}, HOLL is not able to generate a key relation beyond \texttt{2} expressions with a timeout of \texttt{4} hours. Instead, when backslicing is active, the HOLL performance is linearly dependent on the number of locked expressions to infer. These results are with depth of variable selection in range \texttt{[2, 4]}.

\section{Related Work}\label{sec:relatedwork}
\subsubsection{Logic Locking: Attacks and Defenses.}
Existing logic locking methods aptly operate on the gate-level netlists~\cite{llcarxiv}. Gate-level locking cannot obfuscate all the semantic information because logic synthesis and optimizations absorb many of them into the netlist before the locking step. For example, constant propagation absorbs the constants into the netlist. Recently, alternative high-level locking methods obfuscate the semantic information before logic optimizations embed them into the netlist~\cite{dac18,rtlOzgur}. For example, TAO applies obfuscations during HLS~\cite{dac18} but requires access to the HLS source code to integrate the obfuscations and cannot obfuscate existing IPs. Protecting a design at the register-transfer level (RTL) is an interesting compromise~\cite{obfuscateDSPRTL,5401214}. Most of the semantic information (e.g., constants, operations, and control flows) is still present in the RTL and obfuscations can be applied to existing RTL IPs. In~\cite{obfuscateDSPRTL}, the authors propose structural and functional obfuscation for DSP circuits. In~\cite{5401214}, the authors propose a method to insert a special finite state machine to control the transition between obfuscated mode (incorrect function) and normal mode (correct function). Such transitions can only happen with a specific input sequence. Differently from~\cite{9218519}, we extract the relation directly from the analysis of a single RTL design, making the approach independent of the design flow. None of these methods consider the possibility of hiding a relation among the key bits.

\subsubsection{Program Synthesis.}

 Program synthesis has been successful in many domains: synthesis of heap manipulations~\cite{Roy:2013,Garg:2015,Verma:2017}, bit-manipulating programs~\cite{Jha:2010}, bug synthesis~\cite{BugSynthsis}, parser synthesis~\cite{Leung:2015,Singal:2018}, regression-free repairs~\cite{Bavishi:2016, Bavishi:2016b}, synchronization in concurrent programs~\cite{Verma:2020}, boolean functions~\cite{Manthan1,Manthan2,ManthanDQBF} and even differentially private mechanisms~\cite{Kolahal}.
There has also been an interest in using program synthesis in hardware designs~\cite{cook2009}. VeriSketch~\cite{ardeshiricham2019} exploits the power of program synthesis in hardware design. Our work is orthogonal to the objectives and techniques of VeriSketch: while  VeriSketch secures hardware against timing attacks, we propose a hardware locking mechanism. Zhang et al.~\cite{zhang2020} use SyGUS based program synthesis to infer environmental invariants for verifying hardware circuits. We believe that this work shows the potential of applying programming languages techniques in hardware design. We believe that there is also a potential of applying program analysis techniques, both symbolic~\cite{KLEE,DART,Hydra,Khurana:2017,Pandey:2019}, dynamic~\cite{kiteration,chouhan2013} and statistical~\cite{tarantula:2002,liblit:2005,Ulysis,Modi:2013}, for hardware analysis; this is a direction we intend to pursue in the future.


\section{Discussion}\label{sec:Discussion}

We end the paper with an important clarification: the \textit{eFPGA configuration} in HOLL can also be represented as a bit sequence (i.e., a sequence of \textit{configuration} bits). So, why can an attacker not launch attacks similar to SAT attacks on logic locking to recover the HOLL configuration bitstream?

The foremost reason is that while the key-bits in traditional logic locking simply represent a \textit{value} that the attacker attempts to recover, the bit-sequence in HOLL is an encoding of a \textit{program}~\cite{clift2020, petersen2019}. This raw bit-sequence used to program an eFPGA is too ``low-level" to be synthesized directly---the size of such bit-streams is about \textbf{60-85} times of the keys used in traditional logic locking (128 key bit-sequence). So, the HOLL algorithm designer uses a higher-level domain-specific language (DSL) to synthesize the key relation (see \S\ref{sec:dsl}), that is later ``compiled" to the configuration sequence. The attacker will also have to use a similar strategy of using a high-level DSL to break HOLL.

However, while the designer of the key relation can use a well-designed small domain-specific language (DSL) that includes the exact set of components required (and a controlled budget) to synthesize the key relation, the attacker, not aware of the key relation or the DSL, will have to launch the attack with a ``guess" of a large overapproximation. In other words, \textbf{the domain-specific language used for synthesis is also a \textit{secret}}, thereby making HOLL much harder to crack than traditional logic locking. 

We evaluate HOLL (\S\ref{sec:attackreselience}) under the assumption that the DSL (and budget) are known to the attacker. In real deployments (when the DSL is not known to the attacker), HOLL will be still more difficult to crack.

\subsubsection{Acknowledgements.}
We thank the anonymous reviewers for their valuable inputs. The second author is supported in part by ONR grant number N000141812058.
\bibliographystyle{splncs04}
\bibliography{biblio}


\end{document}